\documentclass[runningheads]{llncs}
\usepackage[title]{appendix}
\usepackage{graphicx}
%
\usepackage[hidelinks]{hyperref}
\usepackage[all]{hypcap} 
\usepackage{algorithm}
\usepackage[noend]{algpseudocode}
\usepackage{multirow}
\usepackage{amsmath,amssymb}
\let\doendproof\endproof
\renewcommand\endproof{~\hfill\qed\doendproof}

\begin{document}
\title{Complexity of Public Goods Games on Graphs\thanks{This project has received funding from the European Research Council (ERC) under the European Union’s Horizon 2020 Research and Innovation Programme (grant agreement no. 740282).
A shorter version of this paper was first published in 15th Symposium on Algorithmic Game Theory (SAGT 2022) by Springer Nature.}}

\author{Matan Gilboa \and Noam Nisan}

\authorrunning{M. Gilboa and N. Nisan}

\institute{School of Computer Science and Engineering, Hebrew University of Jerusalem
\email{matan.gilboa@mail.huji.ac.il}\\
\email{noam@cs.huji.ac.il}}
\maketitle 
\begin{abstract}
We study the computational complexity of "public goods games on networks". In this model, each vertex in a graph is an agent that needs to take a binary decision of whether to "produce a good" or not. Each agent's utility depends on the number of its neighbors in the graph that produce the good, as well as on its own action. This dependence can be captured by a "pattern" $T:{\rm I\!N}\rightarrow\{0,1\}$ that describes an agent's best response to every possible number of neighbors that produce the good. Answering a question of [Papadimitriou and Peng, 2021], we prove that for some simple pattern $T$ the problem of determining whether a non-trivial pure Nash equilibrium exists is NP-complete. We extend our result to a wide class of such $T$, but also find a new polynomial time algorithm for some specific simple pattern $T$. We leave open the goal of characterizing the complexity for all patterns.

\keywords{Nash Equilibrium \and Public Goods \and Computational Complexity.}
\end{abstract}

\section{Introduction}

We study scenarios where there is a set of agents, each of which must decide whether to make an effort to produce some ``good'', where doing so benefits not only himself but also others. This general type of phenomena is captured by the general notion of public goods, and we focus on cases where the public good produced by some agent does not benefit {\em all} other agents but rather there is some neighborhood structure between agents specifying which agents benefit from the public goods produced by others. Examples for these types of scenarios abound: anti-pollution efforts that benefit a geographical neighborhood, research efforts that benefit other researchers in related areas, vaccination efforts, security efforts, and many more.

We focus on the following standard modeling of {\em public goods games on networks}: We are given an undirected graph, where each node models an agent, and the neighbors of a node are the other agents that benefit from his production of the public good. We focus on the case where each agent has a single boolean decision to make of whether to produce the good or not, and furthermore, as in, e.g., \cite{Modifications,Parameterized,Directed_Paper,Point_Out_Flaw,Corrected}, limit ourselves to cases where the effect of an agent's neighbors is completely characterized by the number of them that produce the good. Furthermore we focus on the cleanest, so called, {\em fully homogenous} case where all agents have the same costs and utility functions so all of the heterogeneity between agents is captured by the graph structure. 

Formally, there is a cost $c$ that each agent pays if they produce their good, and a utility function $u(s_i,k_i)$ describing each agent's utility, where $s_i \in \{0,1\}$ describes whether agent $i$ produces the good and $k_i \in {\rm I\!N}$ is the number of $i$'s neighbors (excluding $i$) that produce the good\footnote{In this paper, we use the notation ${\rm I\!N}$ to also contain 0, and use ${\rm I\!N}_+$ to exclude 0.}. We focus on the 'strict' version of the problem \cite{Parameterized}, where we do not allow knife's-edge cases where $u(1,k_i)-u(0,k_i)=c$. Therefore, our agent's best response to exactly $k_i$ of its neighbors producing the good is to produce the good ($s_i=1$) if $u(1,k_i)-u(0,k_i)>c$ and to not produce the good ($s_i=0$) if $u(1,k_i)-u(0,k_i)<c$. Hence, we can summarize the Best Response Pattern as $T:{\rm I\!N} \rightarrow \{0,1\}$. We study the following basic problem of finding a non-trivial pure Nash equilibrium in a network.

\vspace{0.1in}
\noindent
{\bf Equilibrium in a public goods game:} For a given best response pattern $T:{\rm I\!N} \rightarrow \{0,1\}$, given as input an undirected graph $G=(V,E)$, determine whether there exists a pure non-trivial Nash equilibrium of the public goods game on $G$, i.e. an assignment $s:V \rightarrow \{0,1\}$ that is not all $0$, such that for every $1\leq i \leq |V|$ we have that $s_i=T(\sum_{(i,j)\in E} s_j)$.
\vspace{0.1in}

Several cases of this problem have been studied in the literature. In \cite{IS_2007}, the ``convex case'' where the pattern is monotone (best response is $1$ if at least $t$ of your neighbors play $1$) and the "Best Shot" case (where the best response is $1$ only if none of your neighbors play $1$, i.e. $T(0)=1$ and for all $k>0$, $T(k)=0$) were shown to have polynomial time algorithms. The general heterogeneous case where different agents may have different patterns of best responses was shown to be NP-complete in \cite{Corrected} as was, in \cite{Parameterized,Point_Out_Flaw}, the fully-homogenous case, if we allow also knife's-edge cases, i.e. for some utility function where for some $k$ we have $u(1,k)-u(0,k)=c$ (which lies outside our concise formalization of patterns, since in these cases, both $1$ and $0$ are best responses).\footnote{An early version \cite{Flawed} of \cite{Corrected} contained an erroneous proof of NP-completeness in the fully-homogenous case for some pattern $T$, but a bug was found by \cite{Point_Out_Flaw} who gave an alternative proof of the NP-completeness of the case that allows $u(1,k)-u(0,k)=c$.} The parameterized complexity for several natural parameters of the graph was studied in \cite{Parameterized}. In \cite{Modifications}, it is shown that in a public goods game, computing an equilibrium where at least $k$ agents produce the good, or at least some specific subset of agents produce it, is NP-Complete. In \cite{Directed_Paper}, a version of this problem\footnote{A version without the non-triviality assumption on the equilibrium.} on {\em directed} graphs was studied, and a full characterization of the complexity of this problem was given for every pattern: except for a few explicitly given best response patterns, for any other pattern the directed problem is NP-complete.\footnote{One of the few easy cases they identify can be seen to apply also to the undirected case: where the pattern alternates between 0 and 1, i.e. $T(k)$ only depends on the parity of $k$, a case that can be solved as a solution of linear equations over $GF(2)$.} They also suggested an open problem of providing a similar characterization for the more standard undirected case and specifically asked about the complexity of the pattern where the best response is $1$ iff exactly one of your neighbors plays $1$. Our main result answers this question, showing that for this specific pattern the problem is NP-complete.

\vspace{0.1in}
\noindent
{\bf Theorem:} For the Best-Response Pattern where each agent prefers to produce the good iff exactly one of its neighbors produces the good, i.e. $T(1)=1$ and for all $k \ne 1$ $T(k)=0$, the equilibrium decision problem in a public goods game is NP-complete.
\vspace{0.1in}

When considering the strict version of the game, this is the first pattern for which the equilibrium problem is shown to be NP-complete, and in fact it is even the first proof that the general problem (where the pattern is part of the input) is NP-complete. We then embark on the road to characterizing the complexity for all possible patterns. We extend our proof of NP-completeness to large classes of patterns. We also find a new polynomial time algorithm for a new interesting case:

\vspace{0.1in}
\noindent
{\bf Theorem:} For the pattern where each agent prefers to produce the good iff at most one of its neighbors produces the good, i.e. $T(0)=T(1)=1$ and for all $k > 1$, $T(k)=0$, the public goods game always has a pure non-trivial equilibrium, and it can be found in polynomial time.
\vspace{0.1in}

We were not able to complete our characterization for all patterns and leave this as an open problem. In particular, we were not able to determine the complexity for the following two cases:

\vspace{0.1in}
\noindent
{\bf Open Problem 1:} Determine the computational complexity of the equilibrium decision problem of a public goods game for the pattern where $T(0)=T(1)=T(2)=1$ and for all $k > 2$, $T(k)=0$.
\vspace{0.1in}

\vspace{0.1in}
\noindent
{\bf Open Problem 2:} Determine the computational complexity of the equilibrium decision problem of a public goods game for the pattern where $T(0)=T(2)=1$ and for all $k \notin \{0,2\}$, $T(k)=0$.
\vspace{0.1in}

\noindent We suspect that at least the first problem is computationally easy, and in fact that there exists a non-trivial pure Nash equilibrium in any graph.

\begin{table}[h!]
\begin{center}
\begin{tabular}{ |c|c|c| } 
 \hline
 \textbf{Category} & \textbf{Pattern} & \textbf{Reference} \\
 \hline
 \multirow{4}{4em}{PTIME} & 1,0,0,0,... & \cite{IS_2007} \\ 
 & 1,1,0,0,0,... & Theorem \ref{Theorem_AMSN_DP} \\ 
 & 0,0,...,1,1,1,... & \cite{IS_2007} \\ 
 & 1,0,1,0,1,0,... & \cite{Directed_Paper} \\ 
 \hline
 \multirow{5}{4em}{NPC} & 0,1,0,0,0,... & Theorem \ref{theorem_SN_npc} \\ 
 & 0,?,?,...,1,0,0,0,... & Theorem \ref{theorem_flat_NPC} \\ 
 & 1,1,?,?,...,0,?,?,...,1,0,0,0,... & Theorem \ref{theorem_sloped_NPC} \\ 
 & 1,0,..,0,1,1,?,?,...,0,0,0,... & Theorem \ref{theorem_+-++} \\ 
 & Adding 1,0 to non-flat hard patterns & Theorem \ref{theorem_add_+-_NPC} \\ 
 \hline
 \multirow{2}{4em}{Open Problems} & 1,1,1,0,0,0,... & \\ 
 & 1,0,1,0,0,0,... & \\ 
 \hline
\end{tabular}
\caption{\label{results_table}A summary of our (and previous) results.}
\end{center}
\end{table}

The rest of this paper is organized as follows: After defining our model and notations in section \ref{section_model}, we present our main theorem (hardness of the Single-Neighbor pattern) in Section \ref{section_SN}, and provide some intuition about the problem. In Section \ref{section_AMSN} we construct a polynomial time algorithm for the At-Most-Single-Neighbor pattern. In Section \ref{section_more_hard_patterns} we characterize a number of classes of patterns for which the problem is hard, by reducing from our main theorem, where each sub-section focuses on a specific class of patterns. Our results are summarized in Table \ref{results_table}.

\section{Model and Notation}
\label{section_model}
We define a \textit{Public Goods Game} (PGG) on an undirected graph $G=(V,E)$ with $n$ nodes $V=\{1,...,n\}$, each one representing an agent. The \textit{neighborhood} of agent \textit{i}, denoted $N(i)$, is defined as the set of agents adjacent to $i$, excluding $i$, i.e: $N(i)=\{j|(j,i)\in E\}$. An edge between two agents' nodes models the fact that these agents are directly affected by each other's decision to produce or not produce the good. The strategy space, which is assumed to be the same for all agents, is $S=\{0,1\}$, where $1$ represents producing the good, and $0$ represents the opposite. The strategy of agent $i$ is denoted $s_i\in S$. 
\begin{definition}
If nodes $i,j$ are adjacent, and $s_j=1$ (i.e. agent $j$ produces the good) we say that $j$ is a \textbf{supporting neighbor} of $i$.
\end{definition}
For convenience, if some node $v$ represents agent $i$, we sometimes write $v=1$ or $v=0$ instead of $s_i=1$ or $s_i=0$ respectively, to mark $i$'s strategy. The utility function is assumed to be the same for all agents. Furthermore, we restrict ourselves to utility functions where an agent is never indifferent between producing and not producing the good, and so always has a single best response according to the strategies of the agents in their neighborhood. This characteristic of the utility function allows us to adopt a more convenient way to inspect a PGG model, which we call the \textit{best response pattern}.
\begin{definition}
For any PGG, we define its Best Response Pattern (BRP), denoted by $T$, as an infinite boolean vector in which the $k^{th}$ entry represents the best response for each agent $i$ given that exactly $k$ neighbors of $i$ (excluding $i$) produce the good:
\begin{align*}
\forall k\in {\rm I\!N} \;\; T[k]= \text{best response to $k$ productive neighbors}
\end{align*}
\end{definition}
We henceforth identify PGGs by their Best Response Pattern, rather than their utility function and cost. This concludes the definition of a PGG model. We now define a \textit{pure Nash equilibrium}, which is our main subject of interest.

\begin{definition}
A strategy profile $s=(s_1,...,s_n)\in S^n$ of a Public Goods Game corresponding to a BRP $T$ is a pure Nash equilibrium (PNE) if all agents play the best response to the strategies of the agents in their neighborhood:
\begin{align*}
\forall i\in [n]\;\; s_i=T[\sum_{j\in N(i)} s_j]
\end{align*}
In addition, if there exists $i\in [n]$ s.t $s_i=1$, then $s$ is called a non-trivial pure Nash equilibrium (NTPNE).
\end{definition}

\begin{definition}
For a fixed BRP $T$, the non-trivial pure Nash equilibrium decision problem corresponding to $T$, denoted by NTPNE($T$), is defined as follows:
The input is an undirected graph $G$. The output is 'True' if there exists an NTPNE in the PGG defined on $G$ with respect to $T$, and 'False' otherwise.
The search version of the problem asks for the NTPNE itself.
\end{definition}

Let us give names to the following three simplest patterns:
\begin{definition}
The Best-Shot best-response pattern is defined as follows:
\begin{align*}
\forall k\in {\rm I\!N} \;\; T[k]=    
    \begin{cases}
      1 & \text{if $k=0$}\\
      0  & \text{if $k\geq1$}
    \end{cases}   
\end{align*}
i.e. 
\begin{align*}
T=[1,0,0,0,0,...]
\end{align*}
\end{definition}

\begin{definition}
The Single-Neighbor best-response pattern is defined as follows:
\begin{align*}
\forall k\in {\rm I\!N} \;\; T[k]=    
    \begin{cases}
      1 & \text{if $k=1$}\\
      0  & \text{otherwise}
    \end{cases}
\end{align*}
i.e. 
\begin{align*}
T=[0,1,0,0,0,...]
\end{align*}
\label{def_SN_BRP}
\end{definition}

\begin{definition}
The At-Most-Single-Neighbor best-response pattern is defined as follows:
\begin{align*}
\forall k\in {\rm I\!N} \;\; T[k]=    
    \begin{cases}
      1 & \text{if $k\leq 1$}\\
      0  & \text{if $k>1$}
    \end{cases}   
\end{align*}
i.e.
\begin{align*}
T=[1,1,0,0,0,...]
\end{align*}
\label{AMSN_BRP}
\end{definition}

The Best-Shot BRP was coined in \cite{IS_2007}, where they prove that a pure Nash equilibrium exists in any graph, and show a correspondence between PNEs and Maximal Independent Sets. We study the Single-Neighbor BRP in Section \ref{section_SN}, where we prove the decision problem is NP-Complete. We study the At-Most-Single-Neighbor BRP in Section \ref{section_AMSN}, where we prove that a pure Nash equilibrium exists in any graph.
\section{Hardness of the Single-Neighbor Pattern}
\label{section_SN}
In this section we prove NP-completeness of NTPNE($T$) defined by the Single-Neighbor BRP, and provide basic intuition about its combinatorial structure. This is a linchpin of our hardness results, from which we reduce to many other patterns.
We remind the reader that in the Single-Neighbor BRP, an agent prefers to produce the good iff exactly one of their neighbors produces it.

\begin{theorem}
Let $T$ be the Single-Neighbor Best Response Pattern. Then NTPNE($T$) is NP-complete.
\label{theorem_SN_npc}
\end{theorem}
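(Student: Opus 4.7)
Membership in NP is immediate: given a proposed assignment $s$, we verify in polynomial time that some coordinate is $1$ and that, at every vertex $v$, the count $\sum_{j \in N(v)} s_j$ equals $1$ precisely when $s_v = 1$. The real work is NP-hardness, and the plan is to reduce from a standard NP-complete constraint-satisfaction problem, most naturally 3-SAT or one of its monotone/$1$-in-$3$ variants, by building a graph whose non-trivial equilibria correspond bijectively to satisfying assignments of the SAT instance.

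It is useful to think of an equilibrium through its support $S := \{i : s_i = 1\}$. The Single-Neighbor rule translates into two combinatorial constraints: the induced subgraph $G[S]$ must be a disjoint union of edges (since each vertex of $S$ has exactly one $S$-neighbor), and every vertex outside $S$ must have either $0$ or at least $2$ neighbors in $S$. Non-triviality demands $S \neq \emptyset$. This reformulation directs the whole gadget construction: each local piece of the graph must force the support to look like a matching locally, while forbidding stray $0$-vertices that would see exactly one partner.

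Given this, I would design three kinds of gadgets. First, a \emph{variable gadget} for each Boolean variable admitting exactly two legal local matchings, interpreted as $\textsc{True}$ and $\textsc{False}$, whose two alternative matched edges serve as the roots of outgoing wires. Second, \emph{literal wires}: chain-like subgraphs in which "the matching propagates" along the chain so that the truth value selected at the variable gadget is delivered to every clause using that literal; negation is obtained by starting the wire from the opposite edge of the variable gadget. Third, a \emph{clause gadget} for each $3$-literal clause that admits a consistent completion exactly when at least one of its three incoming wires arrives in the "True" state. Finally, a small \emph{seed gadget} hard-wires some vertex into $S$ so that the all-$0$ assignment is not an equilibrium and every legal $S$ activates every variable gadget, ruling out the trivial solution.

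The main obstacle is making these gadgets \emph{rigid}: the Single-Neighbor rule is permissive toward $0$-vertices (they may see $0$ or any $k \ge 2$ neighbors in $S$), so many unintended equilibria become possible unless every local choice is carefully constrained. Clause gadgets are particularly delicate, since an OR-gate cannot simply be an output vertex fed by three potentially-active inputs: if $s_{\text{out}}=1$ it must see exactly one $S$-neighbor, and if $s_{\text{out}}=0$ it must not see exactly one, so multi-literal propagation requires auxiliary vertices that absorb or cancel "extra" incoming signals in pairs. Most of the proof effort will go into the soundness direction, arguing that \emph{every} non-trivial equilibrium of the constructed graph must force the variable and clause gadgets into their intended configurations, and hence extracts a genuine satisfying assignment for the original 3-SAT instance.
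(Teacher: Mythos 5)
Your NP-membership argument and your support reformulation (the set $S=\{i: s_i=1\}$ must induce a perfect matching inside $S$, and every vertex outside $S$ must see $0$ or at least $2$ neighbors in $S$) are correct and match the way the paper reasons about the pattern. But what you have written is a plan, not a proof: the variable gadgets, wires, clause gadgets and the rigidity ("soundness") arguments are exactly where all the difficulty lies, and none of them are constructed or verified. You yourself flag that unintended equilibria are the main obstacle; the paper's entire proof consists of overcoming it, via a reduction from ONE-IN-THREE 3SAT with an explicit $9$-node clause gadget (three mutually adjacent Literal Nodes, three Inner Nodes, three Peripheral Nodes) together with Transfer, Copy and Negation Nodes, and a sequence of lemmas showing that in any PNE each active clause gadget has exactly one possible internal configuration, that the connector nodes are forced to $0$, and that activity in one gadget forces activity in all of them. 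Without concrete gadgets and these uniqueness lemmas, neither direction of the reduction is established.

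Beyond incompleteness, one specific ingredient of your plan cannot work as stated: the "seed gadget" that "hard-wires some vertex into $S$ so that the all-$0$ assignment is not an equilibrium." For the Single-Neighbor pattern $T[0]=0$, so the all-zero profile is a pure Nash equilibrium on \emph{every} graph; no gadget can destroy it, which is precisely why the problem is posed for \emph{non-trivial} equilibria. The paper handles non-triviality differently: it shows every connector node must be $0$ in any PNE, so any nonzero vertex lies in some clause gadget, and Transfer Nodes between gadgets then force every clause gadget to be active. You would need to replace your seed idea with a propagation argument of this kind (and also commit to exactly-one clause semantics or build a genuinely rigid OR gadget) before the reduction can be made to go through.
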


Before the proof, we provide intuition about the Single-Neighbor problem, by examining a few simple graphs. First, we note that since $T[0]=0$, a trivial all-zeros PNE exists in any graph. This observation is true for any such pattern\footnote{These patterns are denoted \textit{flat} patterns, and are formally defined in Definition \ref{def_flat}.}, which is the reason we choose to focus on non-trivial PNEs. Now, take for example a simple path with two nodes. The assignment where both nodes are set to $1$ is an NTPNE, since neither of them benefit from changing their strategy. But looking at a simple path with 3 nodes, it is easy to verify that there is no NTPNE. Specifically, the all-ones assignment in such a path is \textit{not} a PNE since the middle node would rather play $0$, as it already has two supporting neighbors.
Generalizing this NTPNE analysis to paths of any size, we see that in order for a simple path with $n$ nodes $x_1,...,x_n$ to have an NTPNE, it must be that $n\equiv2\pmod{3}$. To see why, let us examine $x_1$. If $x_1$ is assigned $0$ then so is $x_2$, as otherwise $x_1$ wishes to change strategy; and since $x_2$ is assigned $0$ then so is $x_3$, and so forth. Therefore, in order to get a non-trivial assignment $x_1$ must be assigned $1$, and it must have a supporting neighbor, which must be $x_2$. This leads to only one possibility for an NTPNE, as shown in Figure \ref{Intuition1_path_n}. A similar analysis shows that a cycle with $n$ nodes has an NTPNE iff $n\equiv0\pmod{3}$ (see Figure \ref{Intuition2_cycle_n}).

\begin{figure}[h!]
\centering
\includegraphics[width=0.6\textwidth]{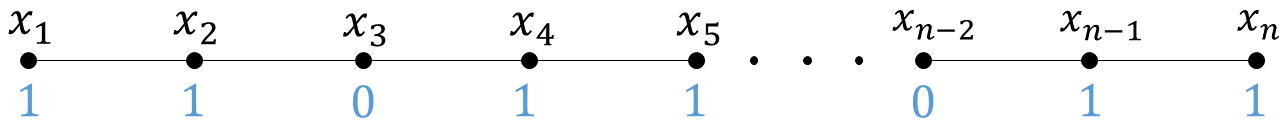}
\caption{\textit{NTPNE: paths with $n\equiv2\pmod{3}$}}
\label{Intuition1_path_n}
\end{figure}
Another simple example is the Complete Graph, or Clique. In any Clique of size at least 2, we can construct an NTPNE by choosing any two nodes to be assigned $1$, and assigning $0$ to all other nodes (see Figure \ref{Intuition3_clique}).
\begin{figure}[h!]
\centering
\begin{minipage}[t]{.35\textwidth}
    \includegraphics[width=\textwidth]{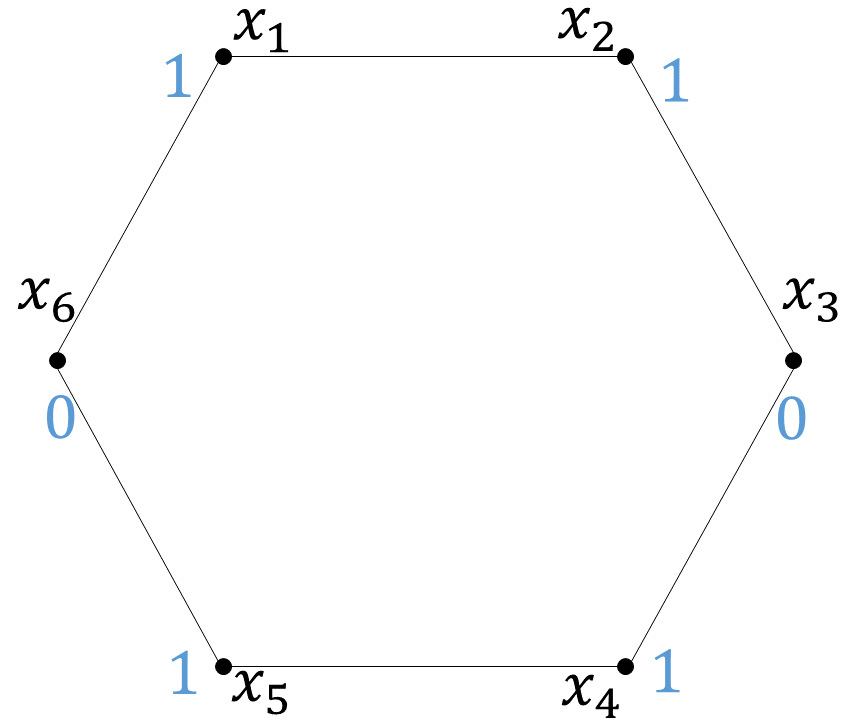}
    \caption{\textit{NTPNE: cycles with $n\equiv0\pmod{3}$}}
    \label{Intuition2_cycle_n}
\end{minipage}%
\hspace{10mm}
\begin{minipage}[t]{.35\textwidth}
    \includegraphics[width=\textwidth]{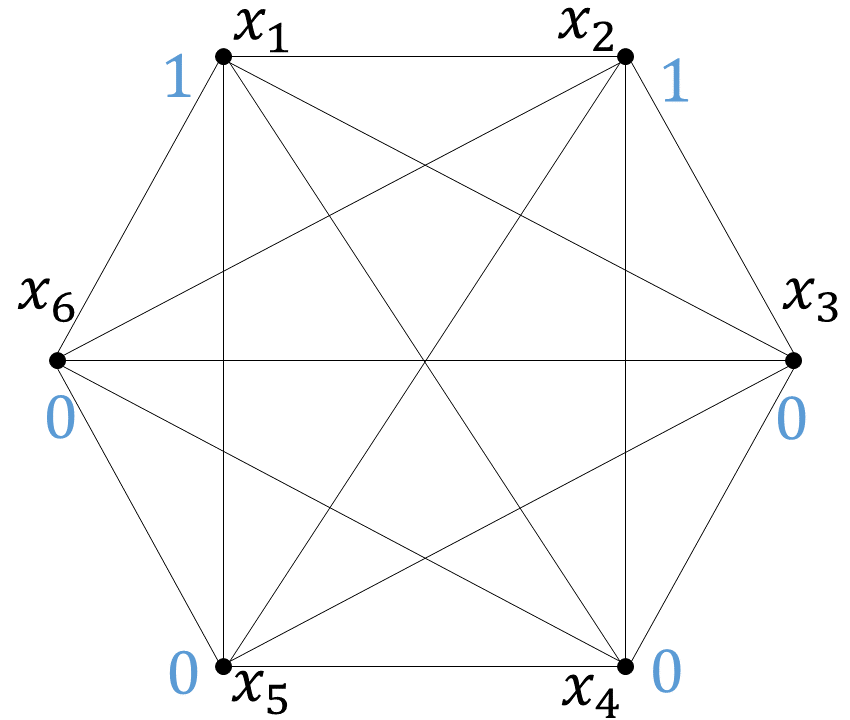}
    \caption{\textit{NTPNE: Clique}}
    \label{Intuition3_clique}
\end{minipage}
\end{figure}
So we see there are cases where there exists an NTPNE, and others where there doesn't, and so the problem is not trivial (and in fact is NP-Hard).

\bigskip
We now begin the proof of Theorem \ref{theorem_SN_npc}, first showing that the problem is NP-Hard. To do so, we construct a reduction from \textit{ONE-IN-THREE 3SAT}, which is a well known NP-complete problem \cite{one_in_three_3sat}. The input of the \textit{ONE-IN-THREE 3SAT} problem is a CNF formula where each clause has exactly 3 literals, and the goal is to determine whether there exists a boolean assignment to the variables such that in each clause exactly one literal is assigned with \textit{True}.

For the reduction, we introduce our Clause Gadget. For each clause $(l_1,l_2,l_3)$ in the \textit{ONE-IN-THREE 3SAT} instance, we construct a 9-nodes Clause Gadget as demonstrated in Figure \ref{A2_cg_ne}. The nodes $l_1,l_2,l_3$ represent the literals of the clause, respectively, and are denoted the \textit{Literal Nodes}. Nodes $a,b,c$ are denoted the \textit{Inner Nodes}, and nodes $x,y,z$ are denoted the \textit{Peripheral Nodes}. Each Literal Node is adjacent to all other Literal Nodes, and to all Peripheral Nodes. In addition, the Literal Nodes $l_1,l_2,l_3$ are \textit{paired} with the Peripheral Nodes $x,y,z$ respectively, in the sense that they share the same Inner Node as a neighbor, and only that Inner Node (for example, $l_1$ and $x$ are \textit{paired} since they are both adjacent to $a$, and not to $b,c$). Notice that the Peripheral Nodes are \textit{not} adjacent to each other. Additionally, note that in the final graph, only the Literal Nodes will be connected to nodes outside the Clause Gadget. The proof is constructed by a number of Lemmas.
\begin{figure}[h!]
\centering
\begin{minipage}[t]{.4\textwidth}
    \includegraphics[width=\textwidth]{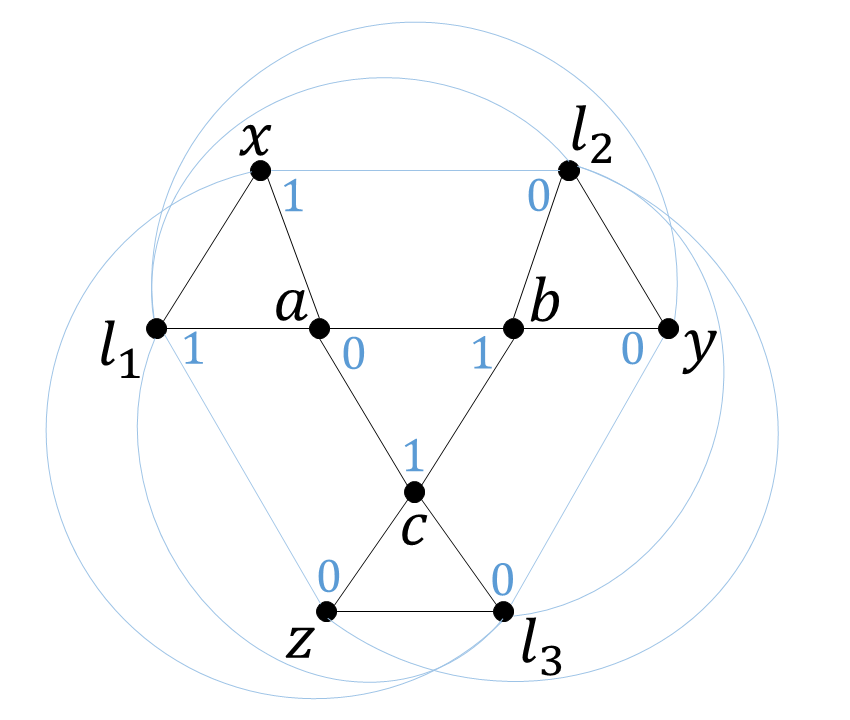}
    \caption{\textit{Clause Gadget, with the NTPNE assignment of Lemma \ref{lemma_cg_pne}}}
    \label{A2_cg_ne}
\end{minipage}%
\begin{minipage}[t]{.6\textwidth}
    \includegraphics[width=\textwidth]{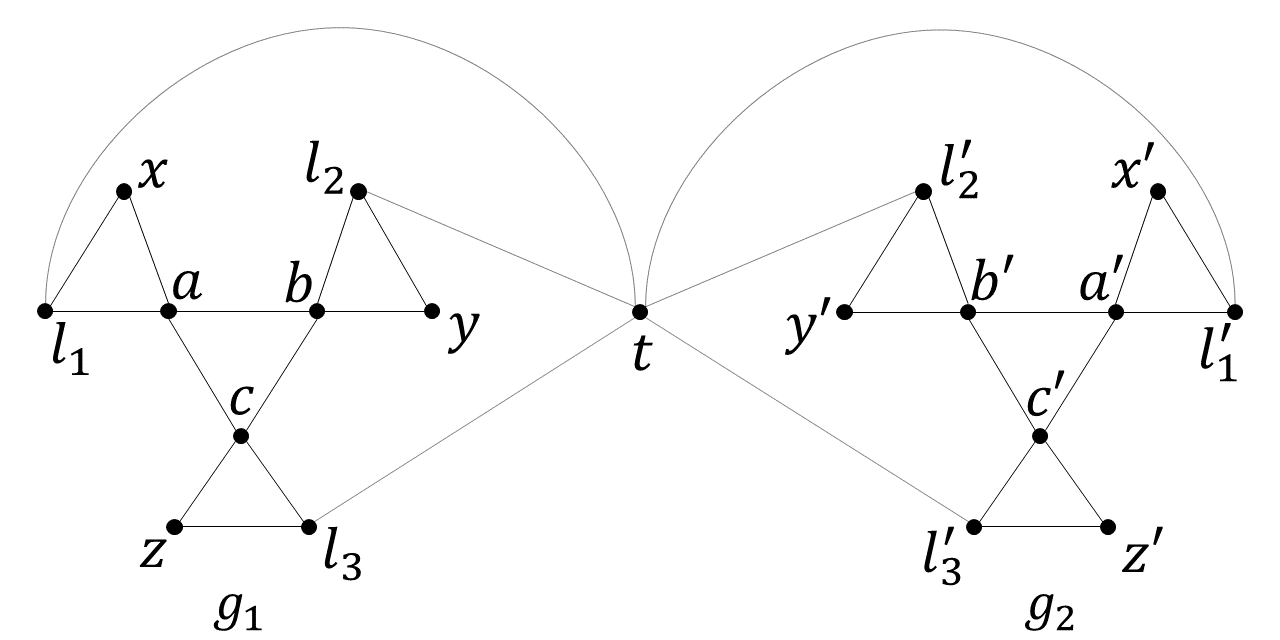}
    \caption{\textit{Transfer Node (blue edges not shown)}.}
    \label{A3_transfer}
\end{minipage}
\end{figure}
\begin{lemma}
In any PNE in a graph which includes the Clause Gadget, if one of the nodes of the Clause Gadget is assigned $1$, then one of the Literal Nodes must be assigned $1$.
\label{lemma_literal_on}
\end{lemma}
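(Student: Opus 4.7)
I would prove the contrapositive: assume all three literal nodes $l_1,l_2,l_3$ are assigned $0$ in some PNE, and deduce that every non-literal gadget node is also $0$.

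The first step is a pair-locking observation. The peripheral node $x$ is adjacent inside the gadget only to $\{l_1,l_2,l_3,a\}$, and it has no external neighbors; once the literals are zeroed, its supp is just $s_a$, so the single-neighbor best response $s_x=T[s_a]$ forces $s_x=s_a$. Symmetrically, $s_y=s_b$ and $s_z=s_c$. Hence if any non-literal gadget node is $1$, then at least one of the pairs $(a,x)$, $(b,y)$, $(c,z)$ must be entirely $(1,1)$.

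Suppose for contradiction that some pair is on; by the symmetry of the gadget under permutation of the three pairs, I may assume $s_a=s_x=1$. Reading off the inner neighborhood from Figure \ref{A2_cg_ne}, with the literals zeroed, $a$'s supp equals $s_x+s_b+s_c$; the equilibrium $s_a=T[\mathrm{supp}(a)]=1$ then forces $s_x+s_b+s_c=1$, and since $s_x=1$ we obtain $s_b=s_c=0$, so by pair-locking $s_y=s_z=0$ as well. Finally, I check the equilibrium at $b$: its supp is $s_{l_2}+s_y+s_a+s_c=0+0+1+0=1$, so the single-neighbor pattern requires $s_b=1$, contradicting $s_b=0$. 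The cases where the initial on pair is $(b,y)$ or $(c,z)$ are handled identically by the three-fold symmetry of the gadget.

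The main obstacle is identifying the right coupling between the three pairs. The pair-locking step is immediate from the fact that inner and peripheral nodes have no external edges, but the contradiction in the second step relies crucially on the edges linking the inner node $a$ to $b$ and $c$ (and symmetrically at $b$ and $c$): without such coupling, simultaneously turning on several pairs would be perfectly consistent with every local best-response inside the gadget. So the real content of the lemma is that those intra-inner edges force a single "on" pair to propagate a destabilising "$1$" to one of its siblings, from which the contrapositive yields the claim that a $1$ anywhere inside the gadget necessitates a $1$ at a literal node.
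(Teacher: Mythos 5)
Your proposal is correct and follows essentially the same route as the paper's proof: both establish that an "on" non-literal node forces its inner--peripheral pair to be $(1,1)$, reduce w.l.o.g.\ to $s_a=s_x=1$, deduce $s_b=s_c=s_y=s_z=0$ from $a$'s best response (the paper gets $y=z=0$ from their lack of supporting neighbors, you via pair-locking — a cosmetic difference), and reach the same contradiction at $b$ (and $c$), which has exactly one supporting neighbor yet plays $0$.
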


\begin{proof}
Assume by way of contradiction that all Literal Nodes are assigned $0$. Since at least one node in the gadget is assigned $1$, it must either be an Inner Node or a Peripheral Node. Notice that if an Inner Node is assigned $1$, w.l.o.g $a=1$, then its neighboring Peripheral Node $x$ has exactly one supporting neighbor and according to the BRP must also be assigned $1$, seeing that x is connected only to $a$ and to the Literal Nodes (which are assigned $0$). Similarly, if a Peripheral Node is assigned $1$, w.l.o.g $x=1$, then its neighboring Inner Node $a$ must also be assigned $1$, as otherwise $x$ would prefer changing strategy.
Therefore, there must be a pair of adjacent Inner Node and Peripheral Node that are both assigned $1$. w.l.o.g $a=x=1$. Since $a$ is assigned $1$, and already has a supporting neighbor, then all other neighbors of $a$ (i.e. $b$ and $c$) must be set to $0$. This leaves us only with $z,y$; since neither of them have any supporting neighbors, they must be set to $0$. The contradiction comes from nodes $b,c$, both of which prefer changing their strategy to $1$, having exactly one supporting neighbor. 
\end{proof}

\begin{lemma} In any PNE in a graph which includes the Clause Gadget, if one of the Literal Nodes of the Clause Gadget is assigned $1$, then the other two Literal Nodes must be assigned $0$.
\label{lemma_one_literal}
\end{lemma}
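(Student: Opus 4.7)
The plan is to prove the contrapositive-style statement by contradiction: assume some literal node, say $l_1$, is assigned $1$, and that at least one of $l_2,l_3$ is also assigned $1$. I would split into two cases according to whether two or three literal nodes are assigned $1$, and in each case exhibit a gadget node whose best-response is violated.

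First I would handle the easy case where all three literal nodes are assigned $1$. Because the three literal nodes form a triangle, $l_1$ already has at least two supporting neighbors inside the gadget (namely $l_2$ and $l_3$). Since the Single-Neighbor pattern satisfies $T[k]=0$ for every $k\neq 1$, this forces $l_1=0$, contradicting $l_1=1$.

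Next I would handle the harder case where exactly two literal nodes are assigned $1$, without loss of generality $l_1=l_2=1$ and $l_3=0$. Since $l_1$ and $l_2$ are adjacent, each already has the other as a supporting neighbor. For $l_1=1$ (resp.\ $l_2=1$) to be a best response, $l_1$ (resp.\ $l_2$) must have no other supporting neighbors. Together this forces $l_3=0$, $x=y=z=0$, $a=b=0$, and additionally that no outside neighbor of $l_1$ or $l_2$ supports them. At this point I would turn to the inner node $a$, whose only neighbors in the whole graph are $l_1$ and $x$ (since only literal nodes of the gadget have edges outside the gadget). Then $a$ has exactly one supporting neighbor, namely $l_1$, so $T[1]=1$ forces $a=1$, contradicting $a=0$.

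The main conceptual obstacle is identifying which gadget node breaks in the two-literal case; the cleanness of the argument relies on the fact that the inner node $a$ has degree only $2$ in the entire graph, so once its two neighbors' strategies are determined by the equilibrium conditions on $l_1$ and $l_2$, its own best response is forced. Once this observation is in hand, the rest of the proof is a short local check at $l_1$, at $a$, and (for the three-literal case) at any literal node.
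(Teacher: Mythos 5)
Your proof is correct and follows essentially the same route as the paper: force all other neighbors of the two adjacent $1$-literals to be $0$ via the best-response condition, then derive a contradiction at the degree-$2$ inner node $a$, which would have exactly one supporting neighbor ($l_1$) yet play $0$. The explicit case split (two vs.\ three literals set to $1$) and the use of only $a$ rather than both $a$ and $b$ are cosmetic differences.
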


\begin{proof} Assume by way of contradiction that two different Literal Nodes are assigned $1$ (w.l.o.g $l_1=l_2=1$). Since $l_1$ and $l_2$ are adjacent, they both already have a supporting neighbor, and so all their other neighbors must be set to $0$. Therefore $l_3=x=y=z=a=b=0$. This leaves us only with node $c$, which must be set to $0$ since all its neighbors are set to $0$. The contradiction comes from nodes $a,b$, both of which prefer changing their strategy to $1$, having exactly one supporting neighbor. 
\end{proof}

\begin{lemma} In any PNE in a graph which includes the Clause Gadget, if one of the Literal Nodes of the Clause Gadget is assigned $1$, then so is its paired Peripheral Node.
\label{lemma_literal_peripheral}
\end{lemma}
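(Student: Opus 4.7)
The plan is to fix, without loss of generality, $l_1=1$ and to derive $x=1$ by contradiction. First I would invoke Lemma \ref{lemma_one_literal} to conclude $l_2=l_3=0$, which pins down the strategies of two of $x$'s four neighbors. This reduces the question to two sub-cases according to the value of the inner node $a$ that is paired with both $l_1$ and $x$.

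Assume for contradiction that $x=0$, and split on $a$. In the sub-case $a=0$, the supporting neighbors of $x$ are exactly $\{l_1\}$, so $x$ has precisely one supporting neighbor; by the Single-Neighbor pattern $x$'s best response is $1$, contradicting $x=0$. The sub-case $a=1$ is the substantive one, and is where I would expect the main work to lie, because $l_1$ can in principle be supported by nodes outside the gadget. I would handle it by first using $l_1$'s own best-response condition: $l_1=1$ needs exactly one supporting neighbor, and since $l_2=l_3=0$ and $a=1$ already contributes one, every other neighbor of $l_1$ must play $0$. In particular $y=z=0$, and $l_1$ has no supporting neighbors outside the gadget.

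With $y=0$ and $l_2=l_3=0$, the neighborhood of $y$ is $\{l_1,l_2,l_3,b\}$, contributing $1+b$ supporting neighbors. For $y=0$ to be a best response, $T[1+b]=0$, forcing $b=1$. But $b$'s only neighbors are $l_2$ and $y$, both now set to $0$, so $b$ has zero supporting neighbors; since $T[0]=0$, the best response of $b$ is $0$, contradicting $b=1$. This closes the contradiction in the second sub-case and completes the proof.

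The main obstacle is precisely the bookkeeping in the $a=1$ sub-case, where one must carefully rule out any outside-the-gadget supporting neighbors of $l_1$ before propagating the forced values around the symmetric pair $(l_2,b,y)$; the symmetry of the gadget and the tight best-response constraint $T[k]=1 \iff k=1$ make this propagation clean, but it is essential to use Lemma \ref{lemma_one_literal} at the outset so that only one literal can absorb any external support.
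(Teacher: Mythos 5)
Your setup matches the paper's: assume $l_1=1$, $x=0$, invoke Lemma \ref{lemma_one_literal} to get $l_2=l_3=0$, and observe that $x=0$ forces $a=1$ (your sub-case $a=0$ is exactly the paper's step ``$x$ cannot have only one supporting neighbor and still prefer playing $0$''). The problem is in your endgame for the sub-case $a=1$: the claim that ``$b$'s only neighbors are $l_2$ and $y$'' is false for the Clause Gadget. The three Inner Nodes $a,b,c$ are mutually adjacent; this is exactly what the paper's other lemmas rely on (in Lemma \ref{lemma_literal_on} the ``other neighbors of $a$'' are $b$ and $c$, and in the equilibrium of Lemma \ref{lemma_cg_pne} the nodes $b=c=1$ are each other's unique supporting neighbors). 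Consequently, in your configuration $a=1$, the forced value $b=1$ is \emph{not} contradicted at $b$: with $c=0$, node $b$ has exactly one supporting neighbor (namely $a$), so $b=1$ is a best response under the Single-Neighbor pattern, and your final step collapses.

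The configuration is still contradictory, but the contradiction must be located elsewhere. The cleanest repair within your chain is to use $a$'s own best-response condition: since $a=1$ and $l_1=1$ already supports it, $a$ must have no other supporting neighbor, so $b=c=0$, which clashes with the $b=1$ you derived from $y$. (The paper's route is the same idea run forward: from $l_1=a=1$ mutually supporting, conclude $b=c=y=z=0$, and then $b$ and $c$ each have exactly one supporting neighbor, namely $a$, yet play $0$ — contradiction.) Your treatment of the outside-the-gadget neighbors of $l_1$ is fine, but as the paper's argument shows it is not even needed for the contradiction; what is essential, and what your write-up gets wrong, is the adjacency of the Inner Nodes.
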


\begin{proof} Assume by way of contradiction that a Literal Node is set to $1$ while its \textit{paired Peripheral Node} is set to $0$. w.l.o.g $l_1=1, x=0$. From Lemma \ref{lemma_one_literal}, we have that $l_2=l_3=0$. Therefore, since $x$ cannot have only one supporting neighbor and still prefer playing $0$, we must set its remaining neighbor, $a$, to $1$. Since $l_1,a$ both have a supporting neighbor, all their other neighbors must be set to $0$. Therefore $b=c=y=z=0$. The contradiction comes from nodes $b,c$, both of which prefer changing their strategy to $1$, having exactly one supporting neighbor. 
\end{proof}

\begin{lemma} In any PNE in a graph which includes a Clause Gadget, if one of the nodes of the Clause Gadget is assigned $1$, then there is only one possible assignment to the nodes in the gadget. Specifically, one Literal Node and its paired Peripheral Node must be set to $1$, and so do the two Inner Nodes that aren't connected to them, whereas all other nodes in the gadget must be set to $0$.
\label{lemma_cg_pne}
\end{lemma}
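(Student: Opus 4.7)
The plan is to chain the three preceding lemmas to pin down one Literal Node together with its paired Peripheral Node, and then propagate best responses through the low-degree peripheral and inner vertices to force the remaining values.

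By hypothesis some node of the gadget carries a $1$, so Lemma \ref{lemma_literal_on} supplies a Literal Node assigned $1$, Lemma \ref{lemma_one_literal} shows it is the unique such literal, and Lemma \ref{lemma_literal_peripheral} also places its paired Peripheral Node at $1$. After breaking symmetry I would take $l_1 = x = 1$ and $l_2 = l_3 = 0$.

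From here I would propagate forcings. Since $l_1 = 1$ already has $x$ as a supporting neighbor and $T[k] = 0$ for every $k \geq 2$, the remaining neighbors of $l_1$ must all be $0$; in particular $y = z = a = 0$. Now $y$ has exactly one supporting neighbor from the literal side (namely $l_1$), and its only remaining neighbor is the paired inner node $b$: if $b$ were $0$ then $y$ would see exactly one supporter and $T[1] = 1$ would force $y = 1$, contradicting $y = 0$. Hence $b = 1$, and symmetrically $c = 1$.

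Finally I would verify that $a$, $b$, and $c$ themselves sit in best response. Node $a$ sees four supporters ($l_1, x, b, c$), consistent with $a = 0 = T[4]$; each of $b$ and $c$ sees a single supporter inside the $\{a,b,c\}$ triangle, consistent with $b = c = 1 = T[1]$. Since every step after the symmetry-breaking choice was forced, the assignment is unique as claimed. I do not expect any real obstacle here: the three earlier lemmas carry the combinatorial weight, and this lemma is essentially the bookkeeping that stitches them together with the local structure at the peripheral and inner nodes.
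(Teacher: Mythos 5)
Your proposal is correct and follows essentially the same route as the paper: chain Lemmas \ref{lemma_literal_on}, \ref{lemma_one_literal}, and \ref{lemma_literal_peripheral} to get $l_1=x=1$, $l_2=l_3=0$, force $a=y=z=0$ because $l_1$ already has a supporter, then force $b=c=1$ via the zero peripheral nodes $y,z$, and check consistency. The only cosmetic difference is that the paper invokes both $l_1$ and $x$ to kill their remaining neighbors while you use $l_1$ alone, which suffices since $a$ is also a neighbor of $l_1$.
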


\begin{proof} Since there exists a node that is set to $1$ inside the gadget, From Lemma \ref{lemma_literal_on} one of the Literal Nodes must be set to $1$, w.l.o.g $l_1=1$. From Lemma \ref{lemma_one_literal} $l_2=l_3=0$, and from Lemma \ref{lemma_literal_peripheral} $x=1$. Since $l_1,x$ are supporting neighbors to each other, they cannot have any other neighbor set to $1$, therefore $a=y=z=0$. Since $y$ is set to $0$ and has only one supporting neighbor ($l_1$), we must set its remaining neighbor $b$ to $1$ as well. Symmetrically, we must set $c$ to $1$ in order to support $z$'s assignment. It is easy to verify that indeed each node of the Clause Gadget is playing its best response given this assignment. 
\end{proof}
So far we have seen that the Clause Gadget indeed permits an NTPNE, and enforces the fact that each clause of the CNF formula must have exactly one literal set to $1$. We now wish to enforce the fact that \textit{all} clauses must have a literal assigned with \textit{True}. We first construct a connection between the Clause Gadgets such that \textit{if} some Clause Gadget has a node set to $1$, then all gadgets must have one. The connection is defined as follows. Each pair\footnote{It is enough to connect all Clause Gadgets as a chain to one another (by Transfer Nodes), but for ease of proof we connect every pair of gadgets.\label{chain_cg_footnote}}
of Clause Gadgets is connected by one \textit{Transfer Node}, denoted by $t$. The Transfer Node is adjacent to all Literal Nodes of both of the gadget to which it is connected, and only to those nodes. The connection between the Clause Gadgets is demonstrated in Figure \ref{A3_transfer}.

\begin{lemma} In any PNE in a graph which includes 2 Clause Gadgets which are connected by a Transfer Node $t$, $t$ must be set to $0$.
\label{lemma_transfer_off}
\end{lemma}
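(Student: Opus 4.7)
The plan is a direct proof by contradiction. Suppose that in some PNE, the Transfer Node $t$ satisfies $t=1$. Since the BRP is the Single-Neighbor pattern, $t=1$ forces $t$ to have exactly one supporting neighbor. The neighbors of $t$ are precisely the six Literal Nodes of the two connected Clause Gadgets, so exactly one of these Literal Nodes is set to $1$. By symmetry, I may assume that this node lies in the first Clause Gadget; call it $l_1$, so $l_1=1$ while the other five Literal Nodes (in both gadgets) are $0$.

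Now I would invoke the earlier Clause Gadget lemmas. Since $l_1=1$, Lemma \ref{lemma_literal_peripheral} tells us that the paired Peripheral Node $x$ satisfies $x=1$ (alternatively, one could invoke the full Lemma \ref{lemma_cg_pne} to pin down the entire assignment of the first gadget). Thus $l_1$ has $x$ as a supporting neighbor inside the gadget. In addition, $l_1$ is adjacent to $t$, which by assumption is also set to $1$. Hence $l_1$ has at least two supporting neighbors.

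This is the desired contradiction: by the Single-Neighbor BRP, a node assigned $1$ in a PNE must have \emph{exactly} one supporting neighbor, whereas $l_1$ has at least two. Therefore $t=1$ is incompatible with being a PNE, and we conclude $t=0$.

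The argument is short and I do not anticipate a real obstacle; the only step that requires any care is the symmetry claim, namely that whichever of the two gadgets the unique supporting Literal Node of $t$ belongs to, the same Clause Gadget lemmas apply. This is immediate from the fact that the two gadgets play interchangeable roles in the construction of Figure \ref{A3_transfer}.
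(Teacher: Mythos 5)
Your proof is correct and follows essentially the same route as the paper: assume $t=1$, deduce that some Literal Node $l_1$ adjacent to $t$ is set to $1$, apply Lemma \ref{lemma_literal_peripheral} to get that its paired Peripheral Node $x$ is also $1$, and conclude that $l_1$ has two supporting neighbors ($x$ and $t$), contradicting the Single-Neighbor pattern. The only cosmetic difference is that you note $t$ needs \emph{exactly} one supporting neighbor, whereas the paper only uses that it needs at least one; both suffice.
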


\begin{proof} Assume by way of contradiction that $t=1$. Then it must have a supporting neighbor. Since $t$ is only connected to Literal Nodes, one of those Literal Nodes must be set to $1$ (w.l.o.g $l_1=1$). From Lemma \ref{lemma_literal_peripheral}, $x$ must also be assigned with $1$, which leads to a contradiction since $l_1$ has 2 supporting neighbors and yet plays $1$.  
\end{proof}

\begin{lemma} In any PNE in a graph which includes at least two Clause Gadgets, which are all connected to each other by Transfer Nodes, if one of the Clause Gadgets has a node set to $1$, then all of the Clause Gadgets have a node set to $1$.
\label{lemma_transfer_works}
\end{lemma}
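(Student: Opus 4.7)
The plan is to leverage the tight structure already established: if a clause gadget has any node set to $1$, then by Lemma \ref{lemma_cg_pne} we know exactly which five nodes are on, and in particular exactly one Literal Node of that gadget is on. I will then use a Transfer Node argument to propagate this to every other clause gadget, appealing to the fact (from the footnote) that every pair of clause gadgets is directly connected by a Transfer Node.

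Concretely, suppose some clause gadget $C$ contains a node assigned $1$. By Lemma \ref{lemma_cg_pne}, exactly one Literal Node of $C$ is assigned $1$; call it $l$. Now fix any other clause gadget $C'$, and let $t$ be the Transfer Node joining $C$ and $C'$. By Lemma \ref{lemma_transfer_off}, $t=0$. Since $l$ is adjacent to $t$ and $l=1$, the node $t$ has at least one supporting neighbor. For $t=0$ to be a best response under the Single-Neighbor pattern $T$, the number of supporting neighbors of $t$ must not be exactly $1$, hence it must be at least $2$.

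The neighbors of $t$ are, by construction, precisely the six Literal Nodes of $C$ and $C'$. By Lemma \ref{lemma_one_literal} applied to $C$, no Literal Node of $C$ other than $l$ is on, so the extra supporting neighbor of $t$ must lie in $C'$. Therefore some Literal Node of $C'$ is assigned $1$, and in particular $C'$ contains a node set to $1$. Since $C'$ was an arbitrary other clause gadget, every clause gadget has a node assigned $1$.

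I do not anticipate any real obstacle here: the whole argument is a one-line counting observation at the Transfer Node, combined with the structural Lemmas \ref{lemma_one_literal}, \ref{lemma_cg_pne}, and \ref{lemma_transfer_off} already proved. The only mild subtlety is invoking the pairwise (rather than chain) connection pattern promised in the footnote, which makes the propagation step immediate for every pair $(C,C')$ without an induction on a path of gadgets.
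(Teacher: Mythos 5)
Your proposal is correct and follows essentially the same argument as the paper: force $t=0$ via Lemma \ref{lemma_transfer_off}, observe that $t$ then needs a second supporting neighbor besides the single on-literal of the first gadget, and conclude via Lemma \ref{lemma_one_literal} that this second supporter must be a Literal Node of the other gadget. The only cosmetic difference is that you invoke Lemma \ref{lemma_cg_pne} where the paper uses Lemmas \ref{lemma_literal_on} and \ref{lemma_one_literal} to pin down the single on-literal.
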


\begin{proof} Denote the gadget that has a node set to $1$ by $g_1$, and let $g_2$ be some other Clause Gadget. Then $g_2$ is connected to $g_1$ via a Transfer Node $t$. Denote the Literal Nodes in $g_1$ by $l_1,l_2,l_3$, and the Literal Nodes in $g_2$ by $l'_1,l'_2,l'_3$. From Lemma \ref{lemma_literal_on} and Lemma \ref{lemma_one_literal} we have that one of $l_1,l_2,l_3$ is set to $1$, while the other two are set to $0$. w.l.o.g assume $l_1=1, l_2=l_3=0$. From Lemma \ref{lemma_transfer_off}, $t=0$, and therefore, $t$ must have another supporting neighbor other than $l_1$. Since $t$'s only neighbors are $l_1,l_2,l_3,l'_1,l'_2,l'_3$, and $l_2=l_3=0$, it follows that one of $l'_1,l'_2,l'_3$ must be set to one, while the other two must be set to $0$. From Lemma \ref{lemma_cg_pne} we know the assignments in each Clause Gadget necessary for an NTPNE, and it is easy to verify that this assignment is still a Nash Equilibrium after adding Transfer Nodes between the gadgets. 
\end{proof}

Now that we have ensured all Clause Gadgets have a node set to $1$ (assuming one of them does), we wish to enforce that any two identical literals in the CNF formula are assigned with the same value. To do so, we introduce another connecting node, which we call the \textit{Copy Node}. Any two\footnote{It is enough to connect all literals representing the same variable as a chain to one another (by Copy Nodes), but for ease of proof we connect every pair of them.\label{chain_copy_footnote}} Literal Nodes $l_1, l'_1$ (from different Clause Gadgets, or possibly from the same one) which represent the same variable in the original CNF formula, will be connected via a Copy Node denoted by $k$, as shown in Figure \ref{A4_copy}. Each Copy Node has exactly two neighbors, which are $l_1,l'_1$.
\begin{figure}[h!]
\centering
\includegraphics[width=0.7\textwidth]{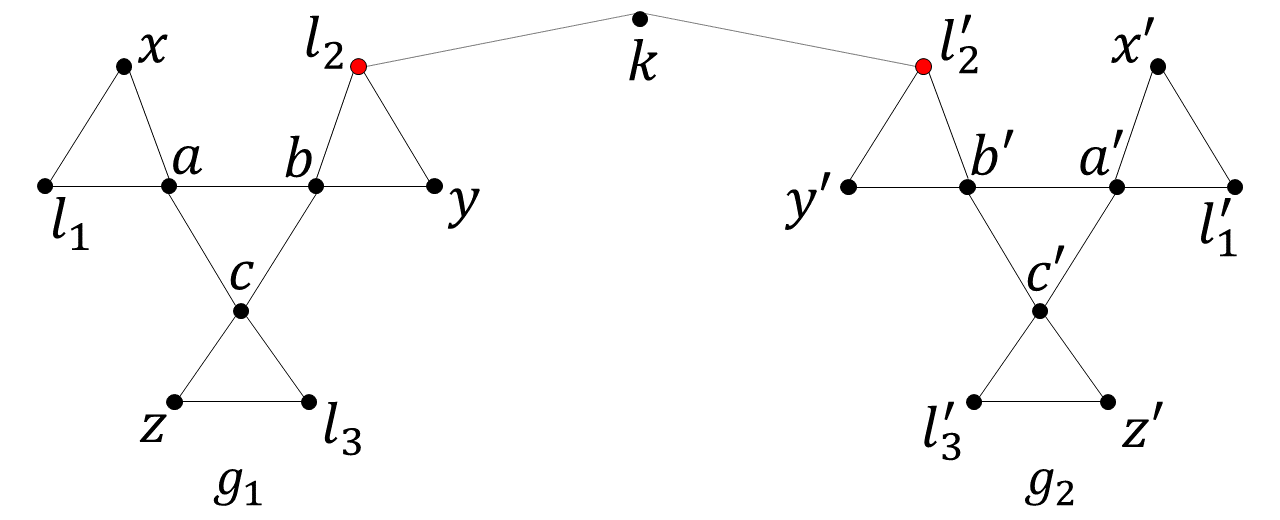}
\caption{\textit{Copy Node (blue edges not shown)}.}
\label{A4_copy}
\end{figure}
\begin{lemma} In any PNE in a graph which includes two Literal Nodes $l_1,l'_1$ in two Clause Gadgets $g_1,g_2$ respectively, where $l_1,l'_1$ are connected by a Copy Node $k$, $k$ must be set to $0$.
\label{lemma_copy_off}
\end{lemma}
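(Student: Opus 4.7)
The plan is to assume, for contradiction, that there is a PNE in which $k=1$, and derive a violation of the Single-Neighbor best-response pattern at one of the Literal Nodes to which $k$ is attached. Since $k$ has exactly the two neighbors $l_1$ and $l'_1$, the equilibrium condition $s_k = T\bigl[s_{l_1}+s_{l'_1}\bigr] = 1$ forces exactly one of $l_1,l'_1$ to be set to $1$. By the symmetry of the situation (both endpoints sit in isomorphic Clause Gadgets), I may assume without loss of generality that $l_1=1$ and $l'_1=0$.

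With $l_1=1$ inside $g_1$, I would then invoke Lemma \ref{lemma_literal_peripheral} for the gadget $g_1$ to conclude that the Peripheral Node $x$ paired with $l_1$ in $g_1$ is also set to $1$. This immediately yields two distinct supporting neighbors of $l_1$, namely $x$ (internal to $g_1$) and $k$ (external). But the Single-Neighbor pattern permits $l_1=1$ only when $l_1$ has exactly one supporting neighbor. This contradiction shows that the assumption $k=1$ is untenable, so $k=0$ in every PNE, which is exactly the claim.

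The only point requiring a brief sanity check is whether Lemma \ref{lemma_literal_peripheral} can legitimately be applied once the Clause Gadget is embedded in a larger graph that attaches Transfer and Copy Nodes to its Literal Nodes. The statement of that lemma is formulated for an arbitrary host graph containing the gadget, and its proof only argues about nodes inside the gadget plus the generic fact that an agent set to $1$ may have at most one supporting neighbor; it never uses an upper bound on the degree of $l_1$. Hence the extra external neighbors of $l_1$ do not interfere, and the invocation is valid. There is no serious obstacle here beyond making this embedding observation explicit.
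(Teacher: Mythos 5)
Your proposal is correct and follows essentially the same route as the paper: the paper proves this lemma by noting that $k$ is attached only to Literal Nodes and reusing the argument of Lemma \ref{lemma_transfer_off}, which is exactly your argument (if $k=1$ then some adjacent Literal Node is $1$, Lemma \ref{lemma_literal_peripheral} forces its paired Peripheral Node to $1$, giving that Literal Node two supporting neighbors while playing $1$, a contradiction). Your added remark that Lemma \ref{lemma_literal_peripheral} is stated for an arbitrary host graph and so survives the extra external attachments is a correct and harmless elaboration of what the paper leaves implicit.
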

\begin{proof}
Since $k$ is connected only to Literal Nodes, the proof of Lemma \ref{lemma_transfer_off} applies to this claim as well. 
\end{proof}
\begin{lemma} In any PNE in a graph which includes two Literal Nodes $l_1,l'_1$ in two Clause Gadgets $g_1,g_2$ respectively, where $l_1,l'_1$ are connected by a Copy Node $k$, $l_1$ and $l'_1$ must have the same assignment.
\label{lemma_copy_works}
\end{lemma}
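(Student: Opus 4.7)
The plan is to leverage the immediately preceding Lemma~\ref{lemma_copy_off}, which already pins down $k=0$, and then just read off what this forces on $l_1$ and $l'_1$ via the Single-Neighbor best-response condition at $k$.

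First I would invoke Lemma~\ref{lemma_copy_off} to conclude that $k = 0$ in any PNE of the graph. Next I would use the fact that, by construction of the Copy Node, $k$'s only two neighbors are $l_1$ and $l'_1$, so the number of supporting neighbors of $k$ is exactly $s_{l_1} + s_{l'_1} \in \{0,1,2\}$.

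Since the configuration is a PNE, $k$ plays its best response, i.e.\ $s_k = T[s_{l_1} + s_{l'_1}]$. Combined with $s_k = 0$ and the Single-Neighbor pattern (Definition~\ref{def_SN_BRP}), which satisfies $T[m] = 0$ iff $m \ne 1$, this rules out $s_{l_1} + s_{l'_1} = 1$. Hence either $s_{l_1} = s_{l'_1} = 0$ or $s_{l_1} = s_{l'_1} = 1$, so $l_1$ and $l'_1$ receive the same value, as claimed.

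There is no real obstacle here: the whole content of the lemma is already packaged into the two facts that $k = 0$ (Lemma~\ref{lemma_copy_off}) and that $k$ has exactly two neighbors, both of which are the literal nodes in question. The main work was done in setting up the Copy Node so that it has degree two and is adjacent only to the pair we want to synchronize; once that is in place, the Single-Neighbor pattern does the synchronization automatically.
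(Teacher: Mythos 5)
Your proof is correct and follows essentially the same route as the paper: invoke Lemma~\ref{lemma_copy_off} to get $k=0$, then observe that if exactly one of $l_1,l'_1$ were set to $1$, the Copy Node $k$ (whose only neighbors are these two) would have exactly one supporting neighbor and would have to play $1$, a contradiction. No gaps.
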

\begin{proof} Assume by way of contradiction that (w.l.o.g) $l_1=1, l'_1=0$. From Lemma \ref{lemma_copy_off}, $k$ must be set to $0$. But since $l_1,l'_1$ are the only neighbors of $k$, $k$ has only one supporting neighbor, and therefore $k$ must be set to $1$, in contradiction. 
\end{proof}

The next property of a \textit{ONE-IN-THREE 3SAT} assignment we need to enforce, is that a variable $x$ and its negation $\overline{x}$ must be set to different values. Since the Copy Nodes already ensure that a variable appearing several times will always get the same value, it is enough to make sure for each variable that one instance of it is indeed different from one instance of its negation.
To do so, we introduce another connecting node, called the \textit{Negation Node}. For each variable $x$, where both $x$ and $\overline{x}$ appear in the CNF formula, we choose one instance of $x$ and one instance of $\overline{x}$ from different clauses.\footnote{We assume a variable and its negation never appear together in the same clause, as the problem without this assumption is easily reducible to the problem with it.} Denote the Literal Nodes representing $x,\overline{x}$ by $l_1,l'_1$ respectively, and denote the other two Literal Nodes residing with $l'_1$ in the same Clause Gadget by $l'_2,l'_3$. We connect a Negation Node $n$ to $l_1$ as well as to $l'_2,l'_3$, as demonstrated in Figure \ref{A5_neg}. For convenience, we say that $l_1,l'_1$ are connected by $n$ even though $n$ is only adjacent to one of them.
\begin{figure}[h!]
\centering
\includegraphics[width=0.7\textwidth]{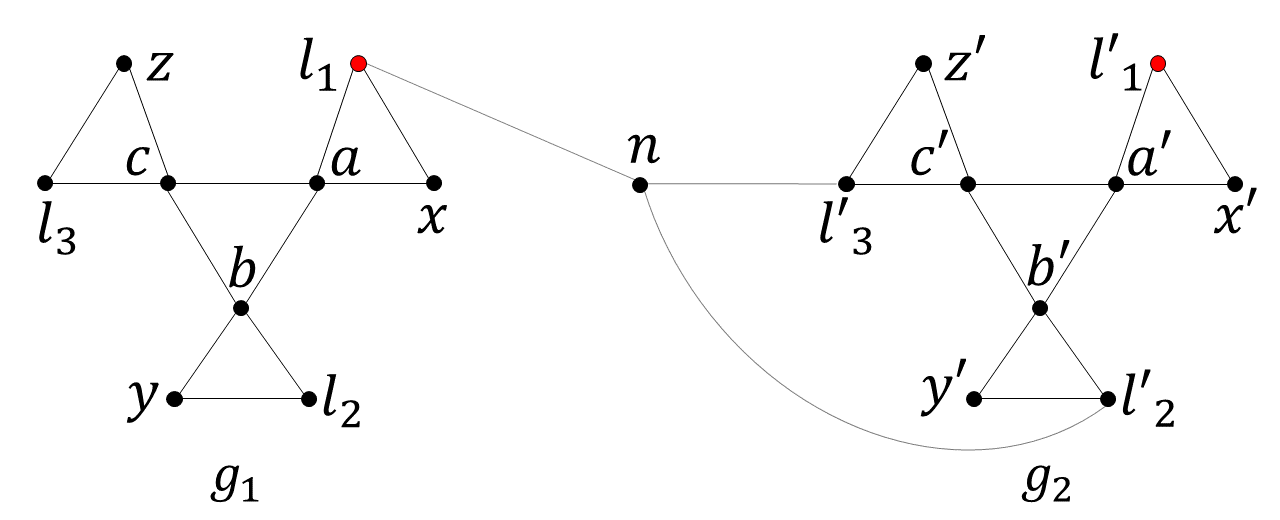}
\caption{\textit{Negation Node (blue edges not shown)}.}
\label{A5_neg}
\end{figure}
\begin{lemma} In any PNE in a graph which includes two Literal Nodes $l_1,l'_1$ in two different Clause Gadgets $g_1,g_2$ respectively, where $l_1,l'_1$ are connected by a Negation Node $n$, then $n$ must be set to $0$.
\label{lemma_neg_off}
\end{lemma}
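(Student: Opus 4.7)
The plan is to mirror the argument used in Lemma \ref{lemma_transfer_off}, exploiting the fact that the Negation Node $n$ is adjacent only to Literal Nodes (namely $l_1$ from $g_1$ and $l'_2, l'_3$ from $g_2$). First I would assume for contradiction that $n=1$. Since the BRP $T$ satisfies $T[k]=1$ only when $k=1$, the node $n$ must have exactly one supporting neighbor among $\{l_1, l'_2, l'_3\}$.

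Next I would do a trivial case split on which of these three literal nodes is the supporting neighbor of $n$. In every case, that literal node is itself assigned $1$, so I can apply Lemma \ref{lemma_literal_peripheral} to conclude that its paired Peripheral Node inside its own Clause Gadget is also assigned $1$. This gives the literal node a supporting neighbor inside its gadget, and $n$ gives it a second supporting neighbor outside. Hence the literal node has at least two supporting neighbors while playing $1$, contradicting $T[2]=0$.

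Since all three cases yield the same contradiction, I conclude $n=0$. I expect no real obstacle here: the structural observation that $n$'s neighborhood lies entirely within literal nodes reduces the argument to a direct reuse of Lemma \ref{lemma_literal_peripheral}, in exactly the same spirit as the analogous statements for Transfer Nodes (Lemma \ref{lemma_transfer_off}) and Copy Nodes (Lemma \ref{lemma_copy_off}). The only point to be careful about is ensuring that the case $l'_2=1$ or $l'_3=1$ is genuinely symmetric to $l_1=1$; this is immediate because Lemma \ref{lemma_literal_peripheral} applies uniformly to any Literal Node of any Clause Gadget, and $l'_2, l'_3$ have their own paired Peripheral Nodes within $g_2$.
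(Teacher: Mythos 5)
Your proposal is correct and matches the paper's argument: the paper simply notes that since $n$ is adjacent only to Literal Nodes, the proof of Lemma \ref{lemma_transfer_off} carries over verbatim, which is exactly the reasoning you spell out (a supporting Literal Node of $n$ would, by Lemma \ref{lemma_literal_peripheral}, also have its paired Peripheral Node set to $1$, giving it two supporting neighbors while playing $1$). Your explicit case split over $l_1, l'_2, l'_3$ is just an unrolled version of the paper's w.l.o.g. step, so the two proofs are essentially identical.
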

\begin{proof} 
Since $n$ is connected only to Literal Nodes, the proof of Lemma \ref{lemma_transfer_off} applies to this claim as well. 
\end{proof}
\begin{lemma} In any PNE in a graph which includes two Literal Nodes $l_1,l'_1$ in two different Clause Gadgets $g_1,g_2$ respectively, where $l_1,l'_1$ are connected by a Negation Node $n$, and each of $g_1,g_2$ has at least one node set to $1$, then $l_1,l'_1$ must be assigned with different values.
\label{lemma_neg_works}
\end{lemma}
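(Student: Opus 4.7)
The plan is to derive a contradiction by counting the supporting neighbors of the Negation Node $n$ in the case $l_1 = l'_1$. By Lemma \ref{lemma_neg_off}, in any PNE we must have $n = 0$. Since $n$ is only adjacent to the three Literal Nodes $l_1, l'_2, l'_3$, the pattern $T$ forces the number of supporting neighbors among these three to avoid exactly $1$; otherwise $n$'s best response would be $1$, contradicting $n = 0$.

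The next step is to use the assumption that each of $g_1, g_2$ has at least one node set to $1$. By Lemma \ref{lemma_cg_pne} (which relies on Lemmas \ref{lemma_literal_on} and \ref{lemma_one_literal}), this means that in each gadget exactly one Literal Node is assigned $1$ and the other two Literal Nodes are assigned $0$. In particular, exactly one of $l'_1, l'_2, l'_3$ is set to $1$ in $g_2$.

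Assume toward contradiction that $l_1 = l'_1$. I would then split into two cases. If $l_1 = l'_1 = 1$, then in $g_2$ the unique Literal Node set to $1$ is $l'_1$, so $l'_2 = l'_3 = 0$, and the supporting neighbors of $n$ among $\{l_1, l'_2, l'_3\}$ number exactly one (namely $l_1$), forcing $n = 1$ by the BRP, a contradiction. If instead $l_1 = l'_1 = 0$, then since exactly one of $l'_1, l'_2, l'_3$ must be $1$ in $g_2$ and $l'_1 = 0$, exactly one of $l'_2, l'_3$ is set to $1$; together with $l_1 = 0$, this again gives $n$ exactly one supporting neighbor and the same contradiction.

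No step looks like a serious obstacle: the argument is a direct case analysis combining Lemma \ref{lemma_neg_off} with the structural characterization of PNE assignments inside a Clause Gadget from Lemma \ref{lemma_cg_pne}. The main thing to be careful about is the asymmetry of the Negation Node connection (it touches $l_1$ on one side but $l'_2, l'_3$ rather than $l'_1$ on the other), which is precisely what makes the equality $l_1 = l'_1$ incompatible with $n$ avoiding exactly one supporting neighbor.
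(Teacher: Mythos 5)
Your proposal is correct and follows essentially the same route as the paper: force $n=0$ via Lemma \ref{lemma_neg_off}, then count $n$'s supporting neighbors among $l_1,l'_2,l'_3$ using the exactly-one-literal structure of the gadget (Lemmas \ref{lemma_literal_on}, \ref{lemma_one_literal}, subsumed by Lemma \ref{lemma_cg_pne}), deriving a contradiction from $n$ having exactly one supporting neighbor. The only difference is cosmetic — you organize the cases as a contradiction on $l_1=l'_1$ rather than the paper's direct split on the value of $l_1$.
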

\begin{proof} Denote by $l'_2,l'_3$ the other two Literal Nodes residing with $l'_1$ in the same Clause Gadget. Divide into cases.
Case 1: If $l_1=1$, then $n$ must have another supporting neighbor (since $n$ itself is set to $0$, from Lemma \ref{lemma_neg_off}). Thus, either $l'_2$ or $l'_3$ must be set to $1$, and from Lemma \ref{lemma_one_literal} we have that $l'_1=0$, as needed.
Case 2: If $l_1=0$, assume by way of contradiction $l'_1=0$. Since $g_2$ has some node set to $1$, then from Lemmas \ref{lemma_literal_on} and \ref{lemma_one_literal} we have that either $l'_2$ or $l'_3$ are set to $1$, and only one of them. Thus, $n$ has exactly one supporting neighbor, and therefore $n$ must be set to $1$, in contradiction to Lemma \ref{lemma_neg_off}. 
\end{proof}

We have shown that our construction enforces all properties of a valid \textit{ONE-IN-THREE 3SAT} solution. But notice that most of the proofs rely on the assumption that every Clause Gadget has at least one node set to $1$. It is left to prove that there cannot be any NTPNE where all of the Clause Gadgets are all-zero.
\begin{lemma} In any NTPNE in the graph constructed throughout the proof of Theorem \ref{theorem_SN_npc}, all Clause Gadgets have at least one node set to $1$.
\label{lemma_all_cg_on}
\end{lemma}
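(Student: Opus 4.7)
The plan is a proof by contradiction exploiting the fact that the only non-Clause-Gadget nodes in the construction (Transfer, Copy, Negation) have all their neighbors inside Literal Nodes of Clause Gadgets.

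First, I would assume towards contradiction that there is some Clause Gadget $g^{*}$ all of whose nodes are assigned $0$, while the overall assignment is an NTPNE (so at least one node in the whole graph is $1$). I would then split on whether some \emph{other} Clause Gadget has a node set to $1$. If yes, I invoke Lemma \ref{lemma_transfer_works}: whenever one Clause Gadget contains a node set to $1$, every Clause Gadget (connected via Transfer Nodes) contains a node set to $1$; this contradicts $g^{*}$ being all-zero. Hence in fact \emph{every} Clause Gadget must be all-zero, and in particular every Literal Node in the entire graph is $0$.

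Second, I would observe that Transfer Nodes, Copy Nodes, and Negation Nodes each have neighborhoods consisting exclusively of Literal Nodes, by construction (see Figures \ref{A3_transfer}, \ref{A4_copy}, \ref{A5_neg}). Since all Literal Nodes are $0$, each such connecting node has exactly $0$ supporting neighbors, and the Single-Neighbor BRP gives $T[0]=0$, forcing every connecting node to play $0$. Combined with the previous step, this makes the entire assignment all-zero, contradicting the NTPNE assumption.

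The only subtlety (and the main thing to verify carefully) is the first case split: I must rule out the possibility that a single Clause Gadget is entirely $0$ while the rest are not. This is exactly what Lemma \ref{lemma_transfer_works} provides, but one should note that its statement is symmetric in the two gadgets connected by a Transfer Node, so "some other gadget has a $1$" does propagate a $1$ into $g^{*}$. Once this observation is in place the rest is immediate from the neighborhood structure of connector nodes plus $T[0]=0$, so I do not expect any obstacle beyond carefully citing the already-proved lemmas.
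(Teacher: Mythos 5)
Your proof is correct and rests on the same two ingredients as the paper's: Lemma \ref{lemma_transfer_works} to propagate a $1$ between gadgets, and the fact that Transfer, Copy and Negation Nodes are adjacent only to Literal Nodes. The paper argues directly (the nonzero node cannot be a connecting node by Lemmas \ref{lemma_transfer_off}, \ref{lemma_copy_off}, \ref{lemma_neg_off}, hence lies in a Clause Gadget, and then Lemma \ref{lemma_transfer_works} applies), while you reorganize this as a contradiction with a case split and, in the all-gadgets-zero case, replace the citation of the ``off'' lemmas by the elementary observation that a connecting node with all-zero neighbors must play $T[0]=0$; this is a harmless and equally valid variant of the same argument.
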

\begin{proof} Since we have an NTPNE, there must be some non-zero node, denoted by $v$. From Lemmas \ref{lemma_transfer_off}, \ref{lemma_copy_off}, and \ref{lemma_neg_off} we have that all Transfer Nodes, Copy Nodes and Negation Nodes must be set to $0$, and therefore $v$ must be a part of a Clause Gadget. \footnote{Lemmas \ref{lemma_transfer_off}, \ref{lemma_copy_off}, and \ref{lemma_neg_off} do not assume that the Clause Gadgets have a node set to $1$, and therefore are valid to use here.} From Lemma \ref{lemma_transfer_works} we conclude that all Clause Gadgets must have a node set to $1$. 
\end{proof}

These Lemmas lead us to the conclusion that indeed any satisfying assignment to the \textit{ONE-IN-THREE 3SAT} problem matches an NTPNE in the constructed graph, and vice versa, which directly implies that the NTPNE problem is NP-Hard. This result is formulated by the claim in Theorem \ref{theorem_SN_npc}, which we can now prove.
\begin{proof} \textit{(Theorem \ref{theorem_SN_npc})} Clearly the problem is in NP, as given any graph and a binary assignment to the nodes, we can verify that the assignment is an NTPNE in polynomial time. It is left to prove the problem is NP-Hard. Given a ONE-IN-THREE 3SAT instance, we construct a graph as described previously.
If there exists a satisfying assignment to the variables of the ONE-IN-THREE 3SAT instance, then we can assign $1$ to all Literal Nodes which represent variables that are assigned 'True', and to the necessary nodes within each gadget according to Lemma \ref{lemma_cg_pne}, and set all other nodes to $0$. Since we saw that the assignment described in Lemma \ref{lemma_cg_pne} forms an NTPNE, and that all connecting nodes (i.e. Transfer, Copy and Negation Nodes) do not affect this NTPNE given that they are all set to $0$, we will get an NTPNE.
In the other direction, if there exists an NTPNE in the constructed graph, From Lemmas \ref{lemma_literal_on}, \ref{lemma_one_literal}, \ref{lemma_transfer_works}, \ref{lemma_copy_works}, \ref{lemma_neg_works}, \ref{lemma_all_cg_on} we have that this NTPNE corresponds to a satisfying assignment to the ONE-IN-THREE 3SAT instance, where every Literal Node set to $1$ will translate to assigning 'True' to its matching variable, and every Literal Node set to $0$ will translate to assigning 'False' to its matching variable. 
\end{proof}
The proof itself gives a slightly more general result: The degree of the constructed graph is bounded\footnote{The reader who has read the details of the proof may verify that, in the version where the gadgets are chained as mentioned in footnotes \ref{chain_cg_footnote} and \ref{chain_copy_footnote}, a Literal Node is attached to 6 nodes within its Clause Gadget, and at most 2 Transfer Nodes, 2 Copy Nodes and 3 Negation Nodes.} by 13. Therefore, our proof does not require any information about the $15^{th}$ entry of the BRP onward, and thus extends to any NTPNE($T$) where $T$ agrees with the first 14 entries of the SN-BRP.
\begin{corollary}
Let $T$ be a BRP such that:
\begin{enumerate}
    \item T[1]=1
    \item $\forall k\in \{0,2,3,...,13\} \;\; T[k]=0$
\end{enumerate}
Then NTPNE($T$) is NP-complete.
\label{theorem_SN_corollary}
\end{corollary}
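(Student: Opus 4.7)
The plan is to observe that Corollary~\ref{theorem_SN_corollary} follows almost for free from a careful inspection of the reduction used in the proof of Theorem~\ref{theorem_SN_npc}. The underlying principle is: if the reduction produces graphs whose maximum degree is bounded by some constant $d$, then every agent has at most $d$ supporting neighbors in any strategy profile, so the best-response condition $s_i = T[\sum_{j \in N(i)} s_j]$ only ever queries $T$ on arguments in $\{0,1,\ldots,d\}$. Consequently, whether a given profile is an NTPNE depends only on the first $d+1$ entries of $T$. If these entries coincide with those of the Single-Neighbor pattern, the same construction serves as a reduction from ONE-IN-THREE 3SAT to NTPNE$(T)$.

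The key step is therefore to bound the maximum degree of the graph $G$ built in the proof of Theorem~\ref{theorem_SN_npc}, using the chained versions of the transfer/copy connections (footnotes \ref{chain_cg_footnote} and \ref{chain_copy_footnote}). I would carry this out by walking through each node type in turn: (i) a Peripheral Node has degree $4$ (three Literal Nodes and one Inner Node); (ii) an Inner Node has degree $4$ (two Literal Nodes and two Peripheral Nodes)—wait, I would simply adopt the exact count given in the author's footnote, namely that a Literal Node accumulates at most $6$ intra-gadget neighbors, $2$ Transfer Nodes (from the chain), $2$ Copy Nodes (from the chain), and $3$ Negation Nodes, for a total of $13$; (iii) a Transfer Node has degree $6$, a Copy Node has degree $2$, and a Negation Node has degree $3$. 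The maximum over all node types is $13$, attained at Literal Nodes.

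From this degree bound I would conclude that, in any strategy profile on $G$, every node $i$ satisfies $\sum_{j\in N(i)} s_j \in \{0,1,\ldots,13\}$. Hence the NTPNE condition $s_i = T[\sum_{j\in N(i)} s_j]$ only consults the entries $T[0],T[1],\ldots,T[13]$. By hypothesis these equal the corresponding entries of the Single-Neighbor BRP. It follows that a profile $s$ is an NTPNE of the PGG on $G$ with respect to $T$ if and only if it is an NTPNE of the PGG on $G$ with respect to the Single-Neighbor BRP. Invoking Theorem~\ref{theorem_SN_npc}, the latter is equivalent to the satisfiability of the original ONE-IN-THREE 3SAT instance, so the reduction transports verbatim to $T$.

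Membership in NP is immediate (verifying that a given assignment is an NTPNE takes polynomial time, with at most $O(\log n)$-bit arithmetic to sum the supporting neighbors and consult the relevant bounded prefix of $T$). I do not anticipate any real obstacle; the only thing to be careful about is to use the chained versions of the inter-gadget connections mentioned in footnotes \ref{chain_cg_footnote} and \ref{chain_copy_footnote} rather than the all-pairs versions, since the latter would produce Literal Nodes of unbounded degree and break the argument. Once that choice is made, the corollary is a direct consequence of the bound-$13$ observation.
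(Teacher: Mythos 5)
Your proposal is correct and follows essentially the same route as the paper: the paper justifies the corollary precisely by the observation that, in the chained version of the construction, the maximum degree is $13$ (with the worst case at a Literal Node: $6$ intra-gadget neighbors plus at most $2$ Transfer, $2$ Copy and $3$ Negation Nodes), so the equilibrium condition never consults $T$ beyond index $13$ and the reduction of Theorem~\ref{theorem_SN_npc} applies verbatim. Your momentary mis-description of an Inner Node's neighborhood (it is adjacent to one Literal Node, its paired Peripheral Node, and the two other Inner Nodes) is harmless, since the maximum degree is attained at Literal Nodes and you adopt the correct count for those.
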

\section{Algorithm for the At-Most-Single-Neighbor Pattern}
\label{section_AMSN}
In this section, we focus on one of the most basic cases of a \textit{monotone} best response pattern, which is the \textit{At-Most-Single-Neighbor BRP}.
We remind the reader that in the At-Most-Single-Neighbor BRP, an agent prefers to produce the good iff at most one of their neighbors produces it. In addition, we formally define a \textit{monotone} best response pattern as follows:
\begin{definition}
A BRP $T$ is called monotonically increasing (resp. decreasing) if for all $k\in {\rm I\!N}$, $T[k]\leq T[k+1]$ (resp. $T[k]\geq T[k+1]$).
\end{definition}
The most basic (and well studied) case of a monotonically-decreasing BRP is the Best-Shot BRP. In ~\cite{IS_2007}, it is shown that in any PGG corresponding to the Best-Shot BRP, an NTPNE always exists. Thus, the decision problem in this case is trivially solvable in polynomial time. In this section, we show a similar result for the At-Most-Single-Neighbor BRP: we prove that an NTPNE exists in any PGG corresponding to this BRP, and present a polynomial time algorithm to find one. An important notion in Graph Theory, which will be of use during our proof, is the \textit{Maximum Independent Set}. Given a Graph $G=(V,E)$, an Independent Set (Henceforth IS) is a subset of nodes $S\subseteq V$ such that no two nodes in $S$ are adjacent. A \textit{Maximal Independent Set} is an IS $S$ such that for any node $v$ outside $S$ it holds that $S\cup\{v\}$ is not an IS. A \textit{Maximum Independent Set} $S$ is a Maximal IS, such that for any Maximal IS $S'$ it holds that $|S|\geq|S'|$. Using this notion, we constructively prove that an NTPNE of the At-Most-Single-Neighbor BRP exists in any graph, by providing a non-polynomial time algorithm to find one. We later alter the algorithm to work in polynomial time.
\begin{algorithm}
\hspace*{\algorithmicindent} \textbf{Input} graph $G=(V,E)$\\
\hspace*{\algorithmicindent} \textbf{Output} NTPNE for the At-Most-Single-Neighbor PGG on $G$
\begin{algorithmic}[1]
\caption{At-Most-Single-Neighbor: Non-Polynomial Time Algorithm}
\label{constructive_algo}
\State find a maximum IS $S\subseteq V$, and assign all its nodes with 1.
\State for each $v\in V\setminus S$, if $v$ has exactly one supporting neighbor at the moment of its assignment, then $v=1$. Otherwise $v=0$.
\end{algorithmic}
\end{algorithm}
\begin{theorem}
Let $T$ be the At-Most-Single-Neighbor BRP.
Given any graph $G$ as an input, Algorithm \ref{constructive_algo} outputs an NTPNE of the PGG defined on $G$ corresponding to $T$, and thus an NTPNE always exists. Therefore, the decision problem NTPNE($T$) is trivially solvable in polynomial time.
\label{Theorem_AMSN_DP}
\end{theorem}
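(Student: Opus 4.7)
The approach is to verify directly that the assignment produced by Algorithm~\ref{constructive_algo} satisfies the best-response condition at every node, using the maximum-IS property of $S$ in exactly one crucial place. I would begin by partitioning $V$ into three classes: $S$ (assigned $1$ in line~1), $A \subseteq V \setminus S$ (assigned $1$ in line~2), and $B = V \setminus (S \cup A)$ (assigned $0$ in line~2). Non-triviality is immediate, since $S$ is non-empty for any non-empty graph.

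The key structural observation is that, by maximality of $S$, every $v \in V \setminus S$ has at least one neighbor in $S$, and hence already has at least one supporting neighbor at the moment it is processed. Therefore each $v \in A$ has exactly one neighbor in $S$ and no earlier-processed neighbor in $A$. From this, $A$ itself is an independent set: two adjacent nodes in $A$ would force the later-processed one to see at least two supporting neighbors at assignment time, contradicting its membership in $A$.

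With these facts the best-response check is mechanical for $A$ and $B$. A node $v \in A$ ends with exactly one supporting neighbor (its unique $S$-neighbor, plus zero $A$-neighbors by independence of $A$), matching $T[1]=1$. A node $v \in B$ had at least two supporting neighbors when processed, and this count can only grow as further nodes are assigned, matching $T[k]=0$ for $k \ge 2$. The interesting case, and the main obstacle, is verifying the best response for $u \in S$: since all of $u$'s neighbors lie in $V \setminus S$, its supporting neighbors in the final assignment are exactly its neighbors in $A$, and I must show this number is at most one.

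This is where the \emph{maximum} (not merely maximal) property of $S$ enters. I would argue by contradiction: if $u \in S$ had two neighbors $v, v' \in A$, then each of $v, v'$ has $u$ as its unique $S$-neighbor, and the pair is non-adjacent because $A$ is independent. The swapped set $S' = (S \setminus \{u\}) \cup \{v, v'\}$ would then be an independent set of cardinality $|S|+1$, contradicting the choice of $S$ as a maximum IS. This completes the verification that the output is an NTPNE, and hence an NTPNE exists on every graph; the decision problem therefore trivially answers True in constant time, with the polynomial-time \emph{search} variant handled by the subsequent modification of the algorithm.
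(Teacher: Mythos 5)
Your proof is correct and follows essentially the same route as the paper's: both verify the greedy assignment node by node, use maximality of $S$ to handle the nodes outside $S$, and use the identical exchange argument (replacing $u\in S$ by two of its $1$-assigned neighbors to contradict maximum cardinality) for the only delicate case. Your recasting as a direct verification with the explicit partition into $S$, $A$, $B$ and the observation that $A$ is independent is just a cleaner organization of the paper's contradiction-based case analysis, not a different argument.
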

\begin{proof}
A Maximum IS always exists in any graph, and therefore stage 1 of the algorithm, though not efficient, is well defined. Assume by way of contradiction that the assignment given by the algorithm is not a PNE. Then there must be some node $u$ that is not playing the best response to its neighbors assignments. Divide into two cases:

\textbf{Case 1}: If $u$ is playing $0$, then at the time of its assignment $u$ must have had at least two supporting neighbors, otherwise the algorithm would have assigned it with $1$. Since the algorithm never changes a node's assignment from 1 to 0, we have that also at the end of the run $u$ has at least two supporting neighbors, and therefore $u$ is playing its best response, in contradiction.

\textbf{Case 2}: If $u$ is playing $1$, then it must be that $u$ has at least two supporting neighbors, otherwise $u$ is playing its best response. Let $x,y$ be two supporting neighbors of $u$. Divide into two sub-cases:

\textbf{Sub-Case 2.1}: If $u\in S$, then $x,y\in V\setminus S$ (otherwise we have a contradiction to $S$ being an IS), and therefore $u$ had received its assignment before $x,y$ did. Any node not in $S$ is only assigned with $1$ by the algorithm if it has exactly one supporting neighbor at the time of the assignment, and therefore $x,y$ only had $u$ as a supporting neighbor at the time of the assignment. Specifically, $x,y$ are not adjacent to any other node in $S$, because all nodes in $S$ were already assigned $1$ by the time $x,y$ were assigned. Therefore, we have that $S':=(S\setminus\{u\})\cup\{x,y\}$ is also an IS. Since $|S'|>|S|$, we have a contradiction to the fact that $S$ is a Maximum IS.

\textbf{Sub-Case 2.2}: If $u\notin S$, then at least one of $x,y$ must have gotten its assignment after $u$, otherwise the algorithm would have assigned $u$ with $0$. w.l.o.g assume $x$ got its assignment after $u$, and in particular $x\in V\setminus S$. Since $x$ was assigned $1$, it had exactly one supporting neighbor at the time of the assignment, hence $u$ was its only supporting neighbor at the time of the assignment. Since $u\notin S$, we have that $x$ is not adjacent to any node in $S$, which means that $S\cup\{x\}$ is an IS, in contradiction to $S$ being a maximal IS. 

Therefore, all nodes play their best response, and so the assignment is an NTPNE.\footnote{Since $T[0]=1$, any PNE must be an NTPNE.}
\end{proof}

Theorem \ref{Theorem_AMSN_DP} shows that the decision problem is easy in this case, using the fact that a Maximum IS always exists in any graph. However, finding a Maximum IS is an NP-Hard problem, and so Algorithm \ref{constructive_algo} does not run in polynomial time. Nevertheless, it does provide a base for our following refined algorithm, which runs in polynomial time and finds an NTPNE in any given graph.
\begin{algorithm}
\hspace*{\algorithmicindent} \textbf{Input} graph $G=(V,E)$\\
\hspace*{\algorithmicindent} \textbf{Output} NTPNE for the At-Most-Single-Neighbor PGG on $G$
\begin{algorithmic}[1]
\caption{At-Most-Single-Neighbor: Polynomial Time Algorithm}
\State find a Maximal IS $S\subseteq V$.
\State perform stage 2 of Algorithm \ref{constructive_algo} using $S$
\If{the assignment is a PNE}
    \State return
\Else
    \State let $u$ be a node which isn't playing its best response
    \State $S \gets S \setminus \{u\}$
    \For{$x$ s.t $(x,u)\in E$}
        \If{$x$ is not adjacent to any node in $S$}
            \State $S \gets S\cup\{x\}$
        \EndIf
    \EndFor
\EndIf
\State go back to stage 2
\label{refined_poly_algo}
\end{algorithmic}
\end{algorithm}

\begin{theorem}
Let $T$ be the At-Most-Single-Neighbor BRP.
Given any graph $G$ as an input, Algorithm \ref{refined_poly_algo} runs in polynomial time and outputs an NTPNE of the PGG defined on $G$ corresponding to $T$.
\label{Theorem_AMSN_poly}
\end{theorem}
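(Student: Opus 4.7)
The plan is to adapt the proof of Theorem \ref{Theorem_AMSN_DP} to the situation where $S$ is merely Maximal (polynomially computable) rather than Maximum, by introducing a local-search step that keeps making progress. The argument has three ingredients: an invariant that $S$ stays a Maximal IS, a monovariant that $|S|$ strictly grows on every non-halting pass, and a polynomial bound on per-iteration cost.

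\emph{Invariant.} Stage 1 can be implemented as a greedy Maximal IS computation in polynomial time. The update step preserves independence because of the conditional check in the for-loop. It preserves maximality because, writing $S_{\text{new}} = (S_{\text{old}} \setminus \{u\}) \cup A$ where $A$ is the set of neighbors of $u$ inserted by the loop, any $v \notin S_{\text{new}}$ with $v \ne u$ was adjacent in $S_{\text{old}}$ to some $w$; if $w \ne u$ then $w \in S_{\text{new}}$, while if $w = u$ is its unique cover then $v \in N(u)$ is scanned by the loop and is either placed into $A$ or confirmed adjacent to $S$ at the moment it was checked. The case $v = u$ (arising only when $u \in S_{\text{old}}$) is handled by the monovariant analysis below, which exhibits at least one neighbor of $u$ in $A$.

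\emph{Monovariant.} Thanks to the Maximal IS invariant, a deviant vertex $u$ cannot satisfy $u \notin S$ with $u = 0$ and at most one supporting neighbor: such a $u$ would be non-adjacent to $S$ (otherwise Stage 2's assignment rule would have set $u = 1$), contradicting maximality. The remaining deviant configurations are exactly the ones analyzed in Sub-Cases 2.1 and 2.2 of Theorem \ref{Theorem_AMSN_DP}: either $u \in S$ with $\ge 2$ supporting neighbors, or $u \notin S$ with $u = 1$ and $\ge 2$ supporting neighbors. In the first case, the argument of Sub-Case 2.1 furnishes two supporting neighbors $x, y$ of $u$ lying outside $S$, each having $u$ as its unique supporting neighbor at stage-2 assignment time (and hence non-adjacent to $S_{\text{old}} \setminus \{u\}$), and moreover non-adjacent to each other (else the later-assigned of the two would have had two supporting neighbors at assignment and been set to $0$). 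The for-loop then places both $x$ and $y$ into $S_{\text{new}}$, so $|S_{\text{new}}| \ge |S_{\text{old}}| + 1$. In the second case, the Sub-Case 2.2 argument produces a supporting neighbor $x$ of $u$ assigned after $u$ in stage 2 and hence non-adjacent to $S_{\text{old}}$; inserting $x$ (and removing nothing, since $u \notin S_{\text{old}}$) again gives $|S_{\text{new}}| = |S_{\text{old}}| + 1$.

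\emph{Polynomial bound and correctness.} Since $|S| \le |V|$ and strictly grows per non-halting iteration, at most $|V|$ iterations are executed. Each iteration performs greedy Maximal IS (in the first pass), Stage 2, the PNE check, and the update loop, all of which are polynomial. The explicit PNE check at line 3 guarantees that the output is a PNE, and non-triviality follows because $S \ne \emptyset$ whenever $V \ne \emptyset$. The main obstacle will be the monovariant in the case $u \in S$: we remove one vertex and must add at least two, which hinges on the non-adjacency of the two supporting neighbors $x, y$ and on the for-loop's additions not being diverted by some other neighbor of $u$ that is itself non-adjacent to $S_{\text{old}} \setminus \{u\}$ yet adjacent to both $x$ and $y$. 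Ruling out such a spoiler — or equivalently, arguing that the loop places $x$ and $y$ into $S$ before any such obstruction is added — is the technical heart of the argument and relies on carefully tracing the stage-2 assignment rule for any purported spoiler.
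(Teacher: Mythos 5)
Your overall plan coincides with the paper's: keep the invariant that $S$ is a Maximal IS, show $|S|$ strictly grows on every non-halting iteration, bound the number of iterations by $|V|$, and let the explicit PNE check (plus $T[0]=1$) give correctness and non-triviality. One small deviation: the paper uses the maximality invariant to rule out the configuration ``$u\notin S$, $u=1$, at least two supporting neighbors'' altogether (the Sub-Case 2.2 contradiction in Theorem \ref{Theorem_AMSN_DP} only needs $S$ maximal, not maximum), so the only possible deviant is the Sub-Case 2.1 one with $u\in S$; you instead retain that configuration and argue progress for it, which is vacuous but harmless.

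The genuine problem is that you leave the crux open and your proposed way of closing it will not work. You correctly identify the ``spoiler'' issue for the case $u\in S$: the for-loop checks adjacency against the \emph{current} $S$, so a neighbor $w$ of $u$ that is non-adjacent to $S\setminus\{u\}$ but adjacent to both $x$ and $y$ could be inserted first and block them, leaving $|S|$ unchanged. But such a $w$ cannot be ``ruled out by tracing the stage-2 rule'': take $u$ adjacent to $x,y,w$ and $w$ adjacent to $x,y$, with $S=\{u\}$; processing $x,y,w$ in that order in stage 2 of Algorithm \ref{constructive_algo} yields $x=y=1$, $w=0$, $u$ is the deviant node, and $w$ is a perfectly legal spoiler — whether it blocks $x,y$ depends only on the (unspecified) order of the for-loop. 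The paper itself does not engage with this subtlety: it simply asserts that since $x,y$ are non-adjacent to every node of $S\setminus\{u\}$ they ``satisfy the condition of stage 9,'' hence at least two nodes are added, implicitly evaluating the condition at the moment $u$ is removed rather than dynamically. An airtight closure is different from the one you sketch: observe that \emph{all} supporting neighbors of $u$ are non-adjacent to $S\setminus\{u\}$ and pairwise non-adjacent (if two were adjacent, the later-assigned one would have had two supporting neighbors in stage 2 and been set to $0$), so if the loop processes the supporting neighbors of $u$ first, all of them — at least two — enter $S$ regardless of anything else, and the monovariant, the maximality invariant (at least one neighbor of $u$ is added), and hence the $|V|$-iteration bound all follow. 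Without fixing this step (or the loop order), your write-up does not establish termination in at most $|V|$ iterations, which is the heart of the theorem.
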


\begin{proof}
\textit{Correctness}:
We show that if at the beginning of iteration $S$ is a Maximal IS, then at the end of it $S$ increases by at least 1, and remains a Maximal IS. Therefore, after at most $|V|$ iterations, either the algorithm finds an NTPNE and stops, or $S$ increases enough to become a Maximum IS, and therefore by Theorem \ref{Theorem_AMSN_DP} the algorithm outputs an NTPNE.

\textbf{$S$ increases by at least 1 after each iteration:}
At the beginning of the iteration $S$ is a Maximal IS, but not necessarily a Maximum IS. Inspect the different cases of the proof of Theorem \ref{Theorem_AMSN_DP}, when assuming that the assignment is not a PNE. Notice that all the cases from the proof of Theorem \ref{Theorem_AMSN_DP} were contradicted by the fact that $S$ is a \textit{Maximal} IS (which is true for the current theorem as well), except for Sub-Case 2.1, which was contradicted by the fact that $S$ is a \textit{Maximum} IS. Therefore, if the assignment given by stage 2 of Algorithm \ref{constructive_algo} is not a PNE, the conditions of case 2.1 of the proof of Theorem \ref{Theorem_AMSN_DP} must hold. Thus, we are guaranteed that any node not playing its best response must be in $S$, and must have at least two neighbors which satisfy the condition of stage 9 in Algorithm \ref{refined_poly_algo}. Therefore, in each iteration of the Algorithm \ref{refined_poly_algo} $S$ gains at least 2 new nodes in the loop of stage 8, and loses exactly one node (which is $u$), and therefore the size of $S$ overall increases by at least 1. 

\textbf{$S$ remains a Maximal IS after each iteration:}
Since $S$ is a maximal IS at the beginning of the iteration, any node not in $S$ is adjacent to at least one node in $S$. In each iteration we only remove a single node $u$ from $S$, hence only $u$ and nodes that were adjacent to $u$ might (possibly) not be adjacent to any node in $S$. After iterating over all of $u$'s neighbors and adding whichever possible to $S$, we have that all of $u$'s neighbors are either in $S$ or adjacent to some node in $S$. Regarding $u$ itself, we have already shown that at least 2 of its neighbors must be added to $S$. Thus we have that $S$ remains a Maximal IS at the end of the iteration.
This concludes the correctness of the algorithm.

\textit{Run-Time}:
Stage 1 can be achieved in $O(|V|)$, greedily. Stages 2,3 and 8-10 all require iterating over all nodes, and for each node iterating over all its neighbors, i.e. $O(|V|^2)$. Therefore, each iteration of the algorithm runs in $O(|V|^2)$. As explained in the Correctness part of the proof, the number of iterations of the algorithm is bounded by $|V|$. Hence the overall run time of the algorithm is polynomial w.r.t the input. 
\end{proof}

\section{More Hard Patterns}
\label{section_more_hard_patterns}
Theorem \ref{theorem_SN_npc} provides a base to discover more classes of BRPs for which the decision problem is hard, which we present in this section. We focus specifically on non-monotone patterns, with a finite number of $1$'s, except for in Section \ref{add_1_0_Section} where we do not assume this. All proofs of this section are postponed to the appendix.
\begin{definition}
A BRP $T$ is called finite if it has a finite number of $1$'s, i.e.
\[\exists N\in {\rm I\!N} \;\;s.t\;\; \forall n>N \;\; T[n]=0\]
\end{definition}

\subsection{Flat Patterns}
\label{section_flat_patterns}
In this section we generalize the result of the Single-Neighbor BRP to any $t$-Neighbors BRP (where an agent's best response is $1$ iff exactly $t$ of their neighbors play 1), and in fact to an even more general case: we show that any BRP that is \textit{flat} (i.e. starting with $0$), non-monotone and finite, models an NP-complete decision problem.
\begin{definition}
A BRP $T$ is called \textit{flat}\footnote{Coined by Papadimitriou and Peng in ~\cite{Directed_Paper}.} if $T[0]=0$.
\label{def_flat}
\end{definition}
\begin{theorem}
Let $T$ be a BRP which satisfies the following conditions:
\begin{enumerate}
\item $T$ is flat
\item $T$ is non-monotone
\item $T$ is finite
\end{enumerate}
Then NTPNE($T$) is NP-complete.
\label{theorem_flat_NPC}
\end{theorem}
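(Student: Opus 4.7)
The plan is to reduce from the Single-Neighbor problem, NP-complete by Theorem \ref{theorem_SN_npc}. Let $k^* = \min\{k : T[k]=1\}$ and $M = \max\{k : T[k]=1\}$; flatness combined with non-monotonicity forces $k^* \geq 1$, finiteness gives $M < \infty$, and $k^* \leq M$. The idea is to simulate every SN-vertex by a $T$-gadget whose representative is engineered to land on supporter count $M$ (a $1$-entry of $T$) exactly when the simulated SN-vertex has exactly one supporting SN-neighbour, and on a count outside $[k^*,M]$ otherwise.

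The construction proceeds in two layers. First, I would build reusable forcing gadgets: a forced-zero $F_0$ (a bare leaf, since $T[0]=0$ makes the leaf's value $0$ in every PNE) and a forced-one $F_1$ (a clique of size $M+1$, whose internal degree $M$ makes the all-ones assignment a PNE via $T[M]=1$), together with Transfer-Node-style propagation gadgetry, in the style of the proof of Theorem \ref{theorem_SN_npc}, that rules out the competing all-zeros internal equilibrium of the clique whenever the surrounding construction is non-trivial. Second, I would use $F_0, F_1$ to build a \emph{copy-relay} bundle that simulates one SN-edge: for each edge $(u,v)\in E(G)$ insert $\alpha := M-k^*+1$ parallel copies of a small gadget whose designated internal node is forced, in every PNE, to take the same value as $u'$ and to contribute one supporter to $v'$, and symmetrically $\alpha$ copies from $v$ to $u$. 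Each representative $v'$ is additionally padded with $k^*-1$ $F_1$-terminals, so that its effective supporter count becomes $k^*-1 + \alpha c_v$, where $c_v$ counts the SN-neighbours $u$ with $u'=1$.

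The arithmetic of $\alpha=M-k^*+1$ is then tight: $c_v=0$ gives count $k^*-1$ and hence $v'=0$; $c_v=1$ gives count $M$ and hence $v'=1$; and $c_v\geq 2$ gives count $\geq 2M-k^*+1 > M$, which lies in the guaranteed-$0$ tail of $T$ provided by finiteness, so $v'=0$. Thus in every NTPNE of $G'$ the representatives obey the SN-rule on $G$, and conversely any SN-NTPNE of $G$ extends to a $G'$-NTPNE by filling each gadget with its forced values; together with trivial membership in NP this yields NP-completeness of NTPNE($T$). The main obstacle is the rigorous implementation of the forcing and copy-relay gadgets for an arbitrary such $T$: the pattern may have $1$-entries scattered strictly between $k^*$ and $M$, which can create unintended internal equilibria, and the symmetric feedback between each representative and its outgoing relays must be absorbed by careful offsetting of the $F_1$-padding. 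I expect this to be handled by a uniform over-padding strategy with $F_0$'s that routes every deviant internal count either down to $0$ or past $M$ (both $0$-entries of $T$ by flatness and finiteness, respectively), in the same gadget-driven spirit as the proof of Theorem \ref{theorem_SN_npc}.
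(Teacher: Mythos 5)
There is a genuine gap: the entire weight of your reduction rests on forcing and copy-relay gadgets that are never constructed, and the two gadgets you do sketch are flawed. The forced-zero gadget fails as stated: a bare leaf attached to a node playing $1$ best-responds with $T[1]$, which may well be $1$ for the patterns covered by Theorem~\ref{theorem_flat_NPC}, so a leaf is not ``$0$ in every PNE''. The forced-one gadget has a worse problem than the all-zeros internal equilibrium you acknowledge: since $T$ is flat, the all-zeros profile is a PNE of \emph{every} graph, so no gadget can force a node to $1$ unconditionally --- only conditionally on non-triviality propagating to it, which requires transfer-style machinery tailored to an arbitrary flat, non-monotone, finite $T$; and even granting that, an $(M{+}1)$-clique node that is attached to a representative $v'$ sees $M{+}1$ supporters when $v'=1$ and the clique is all ones, and $T[M{+}1]=0$ forces it to $0$, so the clique cannot simultaneously be ``all ones'' and feed a $1$-representative. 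Finally, the copy-relay bundle --- a wire whose internal node equals $u'$ in every PNE while contributing one supporter to $v'$ --- is exactly the hard part for a general pattern with arbitrary $1$-entries between $k^*$ and $M$, and you explicitly defer both it and the feedback-offsetting to an unspecified ``over-padding strategy''. As it stands, the proposal is an arithmetic plan ($k^*-1+\alpha c_v$ landing on $M$ or beyond it), not a proof.

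The paper's proof avoids gadget engineering altogether. It reduces from Theorem~\ref{theorem_SN_npc} by letting $N$ be the \emph{largest} index with $T[N]=1$ and replacing the SN-instance $G$ by $N$ replicas of $G$, with every node joined to all $N$ replicas of each of its original neighbours. Replicas of the same node have identical neighbourhoods, hence identical strategies in any PNE, so every supporter count in the new graph is a multiple of $N$; among multiples of $N$ one has $T[0]=0$ (flatness), $T[N]=1$, and $T[kN]=0$ for $k\ge 2$ (finiteness and maximality of $N$), which makes the new game behave exactly like the Single-Neighbor game on $G$. No forced-one, forced-zero, or copy gadgets are needed, and the intermediate entries of $T$ are never consulted. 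If you want to salvage your approach, the quantization idea --- arranging that only supporter counts in a controlled arithmetic set can occur --- is the ingredient to borrow; trying to build conditional forcing gadgets for an arbitrary flat pattern essentially re-opens the whole difficulty of the theorem.
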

The proof can be found in Appendix \ref{appendix_flat_NPC}.
\subsection{Sloped Patterns}
\label{section_antennas}
In this section, we show that the decision problem remains hard when the assumption of flatness is replaced with the assumption that the BRP begins with a finite number of $1$'s, and at least two. That is, we prove hardness of any non-monotone, finite, \textit{sloped} BRP.
\begin{definition}
A BRP $T$ is called \textit{sloped} if $T[0]=T[1]=1$.
\end{definition}
\begin{theorem}
Let $T'$ be a BRP which satisfies the following conditions:
\begin{enumerate}
\item $T'$ is sloped
\item $T'$ is non-monotone
\item $T'$ is finite
\end{enumerate}
Then NTPNE($T'$) is NP-complete under Turing reduction.
\label{theorem_sloped_NPC}
\end{theorem}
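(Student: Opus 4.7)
The plan is to reduce NTPNE$(T^B)$ to NTPNE$(T')$, where $T^B$ is the flat pattern defined by $T^B[k]=1 \iff k \in [1,B]$ for $B$ equal to the length of the last $1$-block of $T'$. Since $T^B$ is flat, non-monotone, and finite, Theorem~\ref{theorem_flat_NPC} shows NTPNE$(T^B)$ is NP-complete, and membership of NTPNE$(T')$ in NP is immediate. Write $N$ for the largest index with $T'[N]=1$, $N'$ for the start of the last $1$-block of $T'$ (so $T'[N'-1]=0$ and $T'[N']=\cdots=T'[N]=1$), and $B=N-N'+1$. The three hypotheses force $N' \geq 3$: sloped gives $T'[0]=T'[1]=1$, and non-monotone together with sloped forces the first $0$ of $T'$ to occur at some $m \geq 2$ strictly before $N'$, so $N'-1 \geq m$.

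Given an input graph $G$ for NTPNE$(T^B)$, I would build $G'$ in two steps. First, attach $N'-1$ degree-one pendants to every node of $G$; because $T'[0]=T'[1]=1$, every such pendant plays $1$ in any PNE, which has the effect of adding a constant $N'-1$ to the count seen by each original node. Second, add a single global ``observer'' node $z$ adjacent to every original node of $G$ and equipped with $N$ additional pendants of its own. Writing $\sigma$ for the number of original nodes playing $1$, the observer's count is $\sigma+N$, so its best response $T'[\sigma+N]$ equals $1$ precisely when $\sigma=0$ and equals $0$ otherwise, using $T'[k]=0$ for every $k>N$.

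The correctness is a case analysis on $z$. When $z=0$, each original node $v$ has effective count $k_v+(N'-1)$, and using $T'[N'-1]=0$, $T'[N']=\cdots=T'[N]=1$, and $T'[k]=0$ for $k>N$, one checks that $T'[k_v+N'-1]=T^B[k_v]$ for every $k_v \geq 0$; no earlier $1$-block of $T'$ can interfere because $k_v+N'-1 \geq N'-1 \geq m$. When $z=1$ the shift becomes $N'$; but $z=1$ forces $\sigma=0$, hence $k_v=0$ for every $v$, and then each original node's best response becomes $T'[N']=1$, contradicting $v=0$. So no PNE of $G'$ has $\sigma=0$, and the remaining PNEs are exactly the $T^B$-PNEs of $G$ with $\sigma \geq 1$ -- i.e., the $T^B$-NTPNEs of $G$; conversely, every such NTPNE extends back to a PNE of $G'$ by setting all pendants to $1$ and $z=0$. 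Pendants make non-triviality on $G'$ automatic, so NTPNE$(T')(G')$ iff NTPNE$(T^B)(G)$, which gives a polynomial-time many-one reduction, hence in particular a Turing reduction as claimed.

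The hard part, and really the only creative step, is choosing the two shifts so that they interact correctly. The pendant-shift $N'-1$ at every original node is dictated by the requirement that $T'$ on the relevant range equal $T^B$ (including $T'[N'-1]=0$, so $T^B[0]=0$ is honored). The observer's $N$ pendants are dictated by the requirement that $z$ be $1$ exactly in the trivial case. These two choices then mesh in a third way: when $z$ fires, the extra $+1$ must bump each original node's count to a $1$-position of $T'$, so the ``trivial'' configuration cannot be a PNE of $G'$. After that is set up, the remaining checks all reduce to the two identities $T'[N+k]=0$ for $k \geq 1$ and $T'[N'-1]=0$, both immediate from the definitions of $N$ and $N'$.
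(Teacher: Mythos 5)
Your reduction is correct, and it takes a genuinely different route from the paper in the one place that matters. Both proofs share the basic ``pendant/antenna'' idea: attach forced-$1$ degree-one nodes (using $T'[0]=T'[1]=1$) to shift every original node's neighbor count so that the relevant window of $T'$ matches a flat pattern covered by Theorem \ref{theorem_flat_NPC}. But you align the shift with the \emph{last} $1$-block of $T'$ and reduce from the interval pattern $T^B=[0,1,\dots,1,0,0,\dots]$, whereas the paper shifts by the position $m$ of the first $0$ and reduces from the shifted pattern $T[k]=T'[k+m]$. More importantly, the two proofs handle the spurious ``all original nodes zero'' equilibrium differently: the paper attaches a Force-1-Gadget to a designated node $v_i$ and must try all $n$ choices of $i$, which is why its statement is only a Turing reduction; your single global observer $z$ with $N$ private pendants detects $\sigma=0$ (since $T'[N]=1$ but $T'[k]=0$ for $k>N$) and, by then contributing $+1$ and pushing every original node to count $N'$ with $T'[N']=1$, makes that configuration unstable. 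This yields a polynomial-time many-one reduction, i.e.\ a strictly stronger conclusion than the theorem as stated. The only caveat is the degenerate instance with no original vertices (there $G'$ still has the PNE with $z=1$ and all pendants $1$, while $G$ has no NTPNE), which you should exclude or special-case; for every non-empty $G$ your case analysis is airtight.
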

The proof can be found in Appendix \ref{appendix_add_2_pluses_NPC}.
\subsection{Sharp Patterns Followed by Two $1$'s}
\label{section_+-++}
Among all non-monotone, finite BRPs, we have shown hardness of those which are flat, and those which are sloped. It is left to address the decision problem for non-monotone, finite BRPs which start with $1,0$.
\begin{definition}
A BRP $T$ is called \textit{sharp} if $T[0]=1$ and $T[1]=0$.
\end{definition}
In this section, we focus on patterns which start with $1$, followed by any positive, finite number of $0$'s, and then $1,1$. We prove that such BRPs present hard decision problems.
\begin{theorem}
Let $T'$ be a BRP which satisfies the following conditions:
\begin{enumerate}
\item $T'$ is finite
\item $T'$ is sharp
\item $\exists m\geq 2 \;\;$ s.t:
    \begin{enumerate}
        \item $\forall 1\leq k<m\;\; T'[k]=0$
        \item  $T'[m]=T'[m+1]=1$
    \end{enumerate}
\end{enumerate}
Then NTPNE($T'$) is NP-complete under Turing reduction.
\label{theorem_+-++}
\end{theorem}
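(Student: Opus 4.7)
The problem is in NP since a pure strategy profile can be verified in polynomial time against the definition of NTPNE. For NP-hardness, the plan is a polynomial-time Turing reduction from NTPNE($T^*$) for a carefully chosen flat, non-monotone, finite pattern $T^*$, which is NP-complete by Theorem \ref{theorem_flat_NPC}. Let $K$ be the largest index with $T'[K]=1$ (well-defined since $T'$ is finite) and let $r$ be the length of the final run of consecutive $1$'s in $T'$ ending at $K$; since $T'[m]=T'[m+1]=1$, we have $r\geq 2$, and $T'[K-r]=0$. Choose $T^* := [0,\underbrace{1,\ldots,1}_{r},0,0,\ldots]$, which is flat, non-monotone and finite.

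The reduction is by \emph{padding}. Given a $T^*$-instance graph $G=(V,E)$, construct $G'$ by attaching to each original node $v\in V$ exactly $K-r$ \emph{anchor} subgraphs, each with a designated port adjacent to $v$ that is forced to play $1$ in every $T'$-PNE of $G'$. The anchor is designed around the two consecutive $1$'s $T'[m]=T'[m+1]=1$: concretely, a clique-like structure whose unique internally stable configuration makes the port see exactly $m$ internal supporters, so that whether $v$ contributes $0$ or $1$ the port sees $m$ or $m+1$ supporters and stays at $1$ by $T'[m]=T'[m+1]=1$. With $K-r$ such anchors, each original node $v$ with $k$ original supporters has total supporter-count $K-r+k$ in $G'$, and its best-response is $T'[K-r+k]$, which by the choice of $K$ and $r$ equals $T^*[k]$ for every $k$. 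Consequently, a $T^*$-NTPNE of $G$ lifts to a $T'$-NTPNE of $G'$, and the restriction to $V$ of any $T'$-PNE of $G'$ is a $T^*$-PNE of $G$.

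The reason the reduction is Turing rather than many-one is that the always-$1$ anchor ports render every $T'$-PNE of $G'$ automatically non-trivial, even when the restriction to $V$ is the trivial $T^*$-PNE (all original nodes zero). To rule this case out, the reduction issues one additional oracle call per $v\in V$: build $G'_v$ from $G'$ by attaching to $v$ an extra \emph{forcing gadget}, built analogously from $T'[m]=T'[m+1]=1$, that shifts $v$'s effective supporter count into a $1$-region of $T'$ for every possible original-supporter-count, thereby requiring $v=1$ in any $T'$-PNE of $G'_v$. Then $G$ admits a $T^*$-NTPNE iff NTPNE($T'$) returns YES on some $G'_v$, giving a polynomial-time Turing reduction with $|V|$ queries. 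The main technical obstacle is the explicit design and verification of the anchor and forcing gadgets: they must remain stable against every possible external configuration, which requires exploiting both the double $1$'s at positions $m$ and $m+1$ and the finiteness of $T'$, and also ruling out alternative internal PNEs of the anchor in which the port could spuriously be set to $0$.
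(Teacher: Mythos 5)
Your high-level plan (pad every original node with gadgets whose interface node is forced to $1$, thereby shifting the pattern back to a flat one covered by Theorem \ref{theorem_flat_NPC}, and then make $n$ oracle calls, one per vertex, with an extra ``force $v=1$'' gadget to handle non-triviality) is structurally the same as the paper's Turing reduction; the paper shifts by $1$ (taking $T$ to be $T'$ with its first entry dropped) while you shift by $K-r$ onto the pattern $[0,1^r,0,\dots]$, and either target works. However, there are two genuine gaps. First, the heart of the proof is precisely the explicit construction and verification of the gadgets, and you defer both: you only posit ``a clique-like structure'' with a unique stable configuration and acknowledge that ruling out spurious internal PNEs is ``the main technical obstacle.'' In the paper this is done concretely (an $(m{+}1)$-clique whose bridge node carries $m-1$ pendant antennas and is attached to $v$, with a lemma showing the bridge is $1$ in \emph{every} PNE and that a consistent internal assignment exists), so leaving it as an obstacle means the proof is not actually there. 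Second, and more seriously, your forcing gadget is based on a mechanism that fails: you propose to ``shift $v$'s effective supporter count into a $1$-region of $T'$ for every possible original-supporter-count.'' That requires a run of consecutive $1$'s in $T'$ of length at least $\deg_G(v)+1$ at the right offset, but the hypotheses only guarantee two consecutive $1$'s (at $m,m+1$) and $T'$ is finite, so no such region need exist; moreover, any gadget that adds supporters to $v$ destroys the count correspondence $T'[K-r+k]=T^*[k]$ on which your main equivalence rests (a $T^*$-NTPNE with $v=1$ and $k$ close to $r$ would no longer lift). The paper avoids this by forcing $v_i=1$ \emph{without} adding any support to $v_i$: its Force-1-Gadget has a bridge node that must play $0$, is wired (via a three-layer construction calibrated to $N$, the last index with $T'[N]=1$) so that it sees exactly $N$ supporters when $v_i=0$, which contradicts $T'[N]=1$; hence $v_i=1$ is forced while $v_i$'s supporter count is unchanged. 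You would need to replace your forcing gadget with something of this kind. (A minor slip: your claim $r\geq 2$ does not follow, since the final run of $1$'s need not contain $m$; this is harmless, as $r\geq 1$ already makes $[0,1^r,0,\dots]$ flat, non-monotone and finite.)
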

The proof can be found in Appendix \ref{appendix_+-++}.
\subsection{Adding $1,0$ to Non-Flat Patterns}
\label{add_1_0_Section}
In this section, we show that any non-flat pattern for which the decision problem is hard remains hard when $1,0$ is added to the beginning of it. Notice that adding $1,0$ at the beginning of a non-flat pattern yields another non-flat pattern, and so, by using this result recursively we can add any finite number of $1,0$ to a non-flat hard pattern, and it will remain hard. So far, the only non-flat patterns we have shown to be hard are the ones from Theorems \ref{theorem_sloped_NPC} and \ref{theorem_+-++}. Notice that adding $1,0$ to a pattern of the form of \ref{theorem_sloped_NPC} simply gives a pattern of the form of \ref{theorem_+-++}, which we have already proved is hard. Thus, until other non-flat patterns are proved hard, the new class of patterns that are shown to be hard in this section is summarized by the form:
\[T=[\underbrace{1,0,1,0,1,0,....}_{finite\;number\;of\;'1,0'},\underbrace{1,0,0,0,...,1,1,?,?,...,0,0,0,...}_{pattern\;from\;Theorem\;\ref{theorem_+-++}}]\]
If other non-flat patterns are proved hard, this result could be applied to them as well.
We begin with a new definition:
\begin{definition}
Let $T,T'$ be two BRPs. We say that $T'$ is \textit{shifted by $m$} from $T$ if: \[\forall k\in {\rm I\!N} \;\; T'[k+m]=T[k]\]
We say that $T'$ is \textit{positively-shifted by $m$} from $T$ if in addition:
\[\forall k<m \;\; T'[k]=1\]
\label{def_shifted}
\end{definition}

\begin{theorem}
Let $T$ be a non-flat BRP s.t NTPNE($T$) is NP-complete. Let $T'$ be a BRP satisfying the following conditions:
\begin{enumerate}
    \item $T'$ is shifted by 2 from $T$.
    \item $T'$ is sharp
\end{enumerate} Then NTPNE($T'$) is NP-complete.
\label{theorem_add_+-_NPC}
\end{theorem}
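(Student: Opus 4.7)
The plan is a polynomial-time Karp reduction from NTPNE($T$) to NTPNE($T'$). Given an input graph $G=(V,E)$ for NTPNE($T$), I construct $G'$ by retaining $G$ and attaching to each vertex $v\in V$ a private \emph{padding gadget} $H_v$ with two designated anchor vertices $a^1_v,a^2_v$, both adjacent to $v$. The gadget is engineered so that in every NTPNE of $G'$ under $T'$, both anchors play $1$ regardless of $v$'s strategy, contributing exactly $+2$ to $v$'s supporting-neighbor count in $G'$. Because $T'[k+2]=T[k]$, the best-response equation for $v$ under $T'$ becomes $s_v=T'[k_v+2]=T[k_v]$, where $k_v$ counts the $V$-neighbors of $v$ playing $1$; this is precisely the NE condition for $G$ under $T$. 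Hence the NTPNEs of the two instances are in bijection, NP-hardness of NTPNE($T$) transfers to NTPNE($T'$), and NP-membership is immediate.

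The gadget $H_v$ exploits the identity $T'[2]=T[0]=1$, which follows from non-flatness and allows small rigid subgraphs to sustain the all-ones assignment (each vertex self-supported at count $2$). A clean unit producing a $+1$ contribution is a $4$-cycle $a{-}b{-}c{-}d{-}a$ whose anchor $a$ is joined to $v$: in isolation every cycle vertex has support exactly $2$ and the all-ones assignment is the unique non-trivial NE. Two disjoint such units through $a^1_v,a^2_v$ yield the constant $+2$ contribution, and a direct case check on $v\in\{0,1\}$ confirms uniqueness of the intended configuration when $T[1]=1$, since then $T'[3]=1$ keeps the anchor's best response at $1$ even when $v=1$ pushes its support to $3$.

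The main obstacle is the case $T[1]=0$, where the plain $4$-cycle fails for $v=1$ because $T'[3]=0$ forces the anchor to $0$. I would compensate by assembling $H_v$ from two complementary kinds of sub-gadgets per anchor: ``flip'' leaves adjacent to $v$ (each pinned to $1-v$ by $T'[0]=1$ and $T'[1]=0$), and ``copy'' sub-gadgets adjacent to $v$ (each forcing a vertex to equal $v$ via $T'[1]=0$ and $T'[2]=1$, supported internally by a rigid substructure whose only NE in isolation puts a designated vertex at $1$). Mixing two of each yields the balanced total contribution $2(1-v)+2v=2$. The delicate step is ensuring that the internal forced-$1$ vertex of the copy sub-gadget remains at $1$ after being coupled to its anchor, which requires the rigid substructure to have enough reserve to outweigh the perturbation; I expect to verify all the desired properties by an exhaustive but finite case analysis of local best responses over the gadget vertices, split along $T[1]\in\{0,1\}$. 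Once $H_v$ is shown to admit no spurious NE, the bijection outlined above closes the proof.
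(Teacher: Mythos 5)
Your reduction has the right shape (pad every original vertex so that its supporter count increases by exactly $2$, then use $T'[k+2]=T[k]$), which is also the paper's strategy, but the gadget you propose does not have the forcing property you claim, already in the branch you call easy. Take your unit: a $4$-cycle $a{-}b{-}c{-}d{-}a$ with only the anchor $a$ joined to $v$, and suppose $v=1$. The assignment $a=b=d=0$, $c=1$ is a local equilibrium irrespective of $T'[3]$: $a$ has exactly one supporter (namely $v$) and $T'[1]=0$; $b$ and $d$ each have exactly one supporter ($c$) and play $0$; $c$ has no supporter and $T'[0]=1$. So the anchor is \emph{not} pinned to $1$ when $v=1$, the ``$+2$ regardless of $v$'' invariant fails, and the soundness direction breaks: an NTPNE of $G'$ may give a vertex playing $1$ only $0$ or $1$ extra supporters, so its restriction to $V$ need not be an equilibrium of $G$ under $T$. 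Your uniqueness check for $T[1]=1$ misses exactly this configuration. The $T[1]=0$ branch is moreover only a plan: the ``rigid substructure whose only NE in isolation puts a designated vertex at $1$'' is asserted, not constructed, and since the only entries of $T'$ you control are $T'[0]=1$, $T'[1]=0$, $T'[2]=1$ (everything from index $3$ on is arbitrary), building and coupling such a force-$1$ gadget is precisely the hard part; the perturbation issue you flag is not resolved.

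The paper avoids all of this with one small change: the $4$-cycle $a,b,c,d$ is attached to $v_j$ through \emph{two adjacent} cycle vertices $a$ and $b$. A short case analysis using only $T'[0],T'[1],T'[2]$ shows that in every PNE $a=b=1$, while $c,d$ adapt to $v_j$ (both $1$ when $v_j=0$, both $0$ when $v_j=1$), so the anchors always have exactly two supporters and never consult $T'[3]$; hence no case split on $T[1]$ and a single many-one (not just Turing) reduction. One further point your ``bijection'' glosses over and the paper handles explicitly: in the reverse direction you must rule out an NTPNE of $G'$ in which all original vertices play $0$ and only gadget vertices play $1$; there each $v$ would have exactly two supporters and $T'[2]=T[0]=1$ forces a contradiction, which is where non-flatness of $T$ is used a second time. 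As it stands, your proposal needs the gadget repaired (or the paper's two-anchor attachment adopted) before the rest of the argument goes through.
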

The proof can be found in Appendix \ref{appendix_add_+-_NPC}.
\begin{subappendices}
\renewcommand{\thesection}{\Alph{section}}
\section{Proofs of Section \ref{section_more_hard_patterns}}
\subsection{Proof of Theorem \ref{theorem_flat_NPC}}
\label{appendix_flat_NPC}
\begin{proof}
Let $T$ be a BRP satisfying the conditions of Theorem \ref{theorem_flat_NPC}. Since $T$ is flat, $T[0]=0$, and since it is non-monotone, some entry of $T$ must be $1$. Since $T$ is finite, let $N$ be the largest entry of $T$ s.t $T[N]=1$, i.e. $\forall n>N \;\; T[n]=0$. If $N=1$, then $T$ is the Single-Neighbor-BRP, which we have already proved to be NP-Hard in Theorem \ref{theorem_SN_npc}. Otherwise, we reduce from NTPNE($SN-BRP$). Let $P$ be a PGG on a graph $G=(V,E)$ corresponding to the SN-BRP. Denote $V=\{v_1,...,v_n\}$. We construct the PGG $P'$ on the graph $G'=(V',E')$, corresponding to the BRP $T$. $G'$ is built from $N$ replicas of $G$, s.t each node in each replica is connected to all $N$ replicas of the neighbors it was originally connected to:
\[V'=\bigcup_{k \in [N]} \{v^{k}_1,...,v^{k}_n\}\]
\[E'=\{(v^{k}_i,v^{l}_j)|(v_i,v_j)\in E\}\]
We show that there exists an NTPNE in $P$ iff there exists an NTPNE in $P'$. Let $s=\{s_1,...,s_n\}$ be an NTPNE in $P$. We construct the strategy profile $s'$ of $P'$ such that:
\[\forall i\in[n] \;\;\forall k\in[N]\;\; s'^{k}_i=s_i\]
i.e. all replicas of each node get the assignment their original node had in $s$.
Let $v^{k}_i\in V'$ be some node corresponding to a node $v_i\in V$ of the original graph $G$. If $s'^{k}_i=1$, then $s_i=1$, and since $s$ is a PNE $v_i$ must have exactly one supporting neighbor in $G$. By the construction of $G'$, $v^{K}_i$ must have exactly $N$ supporting neighbors, since all replicas of each neighbor share the same assignment. Since $T[N]=1$, we have that $v^{k}_i$ is playing its best response. If $s'^{k}_i=0$, then $s_i=0$, and since $s$ is a PNE $v_i$ must have either 0 or at least two supporting neighbor in $G$. By the construction of $G'$, $v^{K}_i$ must have either 0 or at least $2N$ supporting neighbors, since all replicas of each neighbor share the same assignment. Since $T$ is flat, and $\forall k>N T[k]=0$ $v$, we have that $v^{k}_i$ is playing its best response. Therefore $s'$ is a PNE in $P'$. Furthermore, $s'$ is non-trivial because $s$ is non-trivial, thus $s'$ is an NTPNE.

In the other direction, let $s'$ be an NTPNE in $P'$. We arbitrarily choose replica number 1 of the original graph, and construct a strategy profile $s$ of $P$ as follows:
\[\forall i\in [n] \;\; s_i=s'^{1}_i\]
Notice that, in $G'$, all replicas of the same node share exactly the same neighbors. Therefore, their best response must be the same, and so they all must play the same strategy in any equilibrium, and so:
\[\forall i\in[n] \;\;\forall k,l\in[N]\;\; s'^{k}_i=s'^{l}_i\]
Specifically, there must be some non-zero assignment in $\{s'^{1}_i\}_{i\in[n]}$ (otherwise the entire strategy profile must be all-zeros, in contradiction), and therefore also in $\{s_i\}_{i\in[n]}$, so if $s$ is a PNE it is also an NTPNE.
In addition, by the construction of $G'$, any node $u'$ connected to some node $v'$ is connected to all $N$ replicas of $v'$, which all share the same assignment. Thus, by the construction of $s$, if a node $v_i\in V$ has $a$ supporting neighbors according to $s$, then $v^{1}_i\in V'$ must have $N\cdot a$ supporting neighbors according to $s'$. Specifically, the number of supporting neighbors each node in $G'$ has must be a multiple of $N$. Let $v_i\in V$ be some node in $G$. If $s_i=1$, then $s'^{1}_i=1$, and since $s'$ is a PNE $v'^{1}_i$ must have exactly $N$ supporting neighbors according to $s'$ (since the best response to any other multiple of $N$ is 0 according to $T$). Therefore, $v_i$ has exactly 1 supporting neighbor according to $s$, which means it is playing the best response according to the SN-BRP. If $s_i=0$, then $s'^{1}_i=0$, and since $s'$ is a PNE $v'^{1}_i$ must have either 0 or at least $2N$ supporting neighbors according to $s'$ (since the only other multiple of $N$ is $N$, which yields 1 as the best response according to $T$). Therefore, $v_i$ has either 0 or at least 2 supporting neighbors according to $s$, which means it is playing the best response according to the SN-BRP. 
\end{proof}

\subsection{Proof of Theorem \ref{theorem_sloped_NPC}}
\label{appendix_add_2_pluses_NPC}
\begin{proof}
For convenience, we demonstrate the general form of $T'$, where the marked indices will be defined shortly:
\[T'=[1,1,...,\underbrace{0}_{m},0,...,\underbrace{1}_{m'},?,?,...,0,0,...]\]
Let $m$ be the smallest index s.t $T'[m]=0$ (there exists one from condition 3). We define the BRP $T$ by:
\[\forall k\in {\rm I\!N} \;\; T[k]=T'[k+m]\]
i.e. $T'$ is positively-shifted by $m$ from $T$ (see Definition \ref{def_shifted}). Notice that $T$ satisfies the conditions of Theorem \ref{theorem_flat_NPC} (otherwise $T'$ must either be monotone or infinite, in contradiction). Hence, NTPNE($T$) is NP-Complete, and we can construct a Turing-reduction from it. Let $m'\geq1$ be the smallest index satisfying $T'[m']=1$ after the first $0$ in $T'$ ($m'$ must exist, since $T'$ is non-monotone). We have that $T'[m'-1]=0$. Let $P$ be a PGG on a graph $G=(V,E)$, where the BRP is $T$. Denote $V=\{v_1,...,v_n\}$. We construct $n$ PGGs $P'_1,...,P'_n$ corresponding to $T'$, on the graphs $G'_1=(V'_1,E'_1),...,G'_n=(V'_n,E'_n)$ respectively. For each $i\in [n]$, $G'_i$ contains the graph $G$, but in addition, each node except $v_i$ is connected to $m$ new 'Antenna' nodes that are only adjacent to that specific node. $v_i$ is connected only to $m-1$ such Antennas, and additionally to a \textit{Force-1-Gadget} (FG). Intuitively, in each graph $G'_i$ we add $m$ supporting neighbors (compared to $G$) to each node, and specifically force node $v_i$ to be assigned $1$ in any PNE.
\begin{figure}[h!]
\centering
\includegraphics[width=0.6\textwidth]{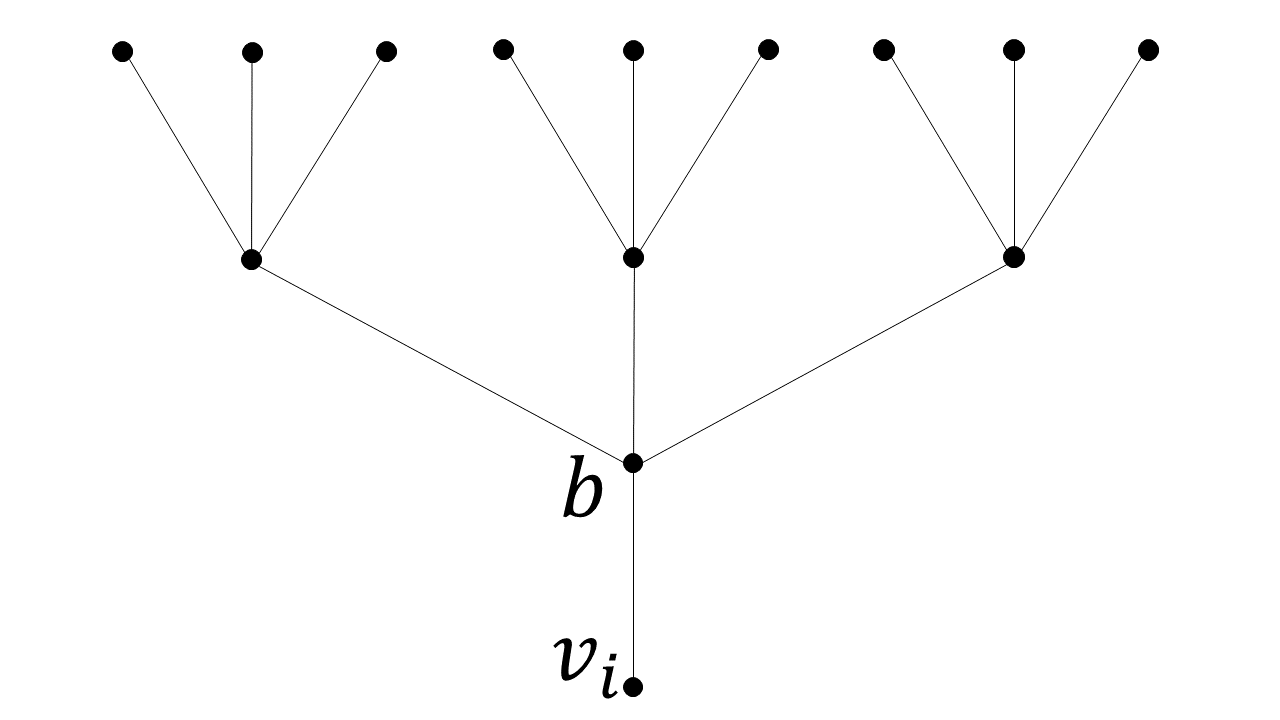}
\caption{\textit{Force-1-Gadget with $m'=4$}}
\label{fig_Antennas_FG}
\end{figure}

\textit{Force-1-Gadget} (shown in Figure \ref{fig_Antennas_FG}). The Force-1-Gadget of $v_i$ consists of 3 layers. The first layer consists of only a single 'bridge' node, denoted $b$, which is also the only node in the gadget connected to $v_i$. The second layer consists of $m'-1$ nodes, all of which are connected to $b$. The third layer consists of $(m'-1)^2$ nodes, such that each node of the previous layer is connected to $m'-1$ unique ones of them (i.e. layer 2 nodes don't share any layer 3 nodes).
\begin{lemma}
For any $i\in [n]$, in any PNE of $P'_i$, $v_i$ must be set to $1$. In addition, all nodes of the Force-1-Gadget must be set to $1$.
\label{lemma1_add_2_pluses}
\end{lemma}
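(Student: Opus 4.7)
The plan is to propagate equilibrium constraints inward from the deepest layer of the Force-1-Gadget toward $v_i$, exploiting two features of $T'$: the sloped condition $T'[0]=T'[1]=1$, and the definitional identities $T'[m'-1]=0$ and $T'[m']=1$ that motivate the gadget's dimensions.

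I would proceed in three stages. First, every third-layer node has degree exactly $1$ in $G'_i$ (its only neighbor lying inside the FG), so its supporting count is in $\{0,1\}$ and sloped-ness immediately pins it to $1$ in any PNE. Second, with the third layer fully on, each second-layer node now has $m'-1$ active third-layer neighbors plus the bridge $b$, so its support is $m'-1$ if $s_b=0$ and $m'$ if $s_b=1$; by choice of $m'$, its best response equals $s_b$ exactly, and the entire second layer takes the common value $s_b$. Third, $b$'s support becomes $(m'-1)\cdot s_b + s_{v_i}$: if $s_b=0$ then this lies in $\{0,1\}$ and sloped-ness forces $s_b=1$, a contradiction; hence $s_b=1$, the support is $(m'-1)+s_{v_i}$, and since $T'[m'-1]=0$ we must have $s_{v_i}=1$. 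This simultaneously yields $v_i=1$ and all FG nodes equal to $1$.

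The main point I would have to be careful about is that this cascade really is local and nothing outside the FG interferes. Concretely, one must confirm from the construction that third-layer and second-layer nodes have no neighbors outside the FG, and that $b$'s only non-FG neighbor is $v_i$---in particular, FG nodes carry no antennas and are not among the $v_j$. Once this is checked, the proof is a purely mechanical layer-by-layer propagation and requires no reasoning about the rest of $G'_i$ or about the original graph $G$; the sloped-ness of $T'$ is what makes the first stage fire, and the specific choice $m'-1, m'$ of boundary values around the "second $1$" of $T'$ is what makes the middle stage couple the second layer rigidly to $b$.
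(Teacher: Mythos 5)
Your proposal is correct and follows essentially the same route as the paper: fix the degree-one third layer to $1$ via $T'[0]=T'[1]=1$, use $T'[m'-1]=0$ and $T'[m']=1$ to tie the second layer to $b$, rule out $b=0$ by sloped-ness, and then force $v_i=1$ because $b=1$ with only $m'-1$ supporters would contradict $T'[m'-1]=0$. Your reformulation of the layer-2 values as uniformly equal to $s_b$ is just a compact restatement of the paper's case analysis, and your locality check (no antennas on FG nodes, $b$'s only outside neighbor is $v_i$) matches the construction.
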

\begin{proof}
We begin by showing the FG nodes must all play $1$.
All nodes in layer 3 of the FG must be set to 1, since they have only one neighbor each, and $T'[0]=T'[1]=1$.
Assume by way of contradiction that $b=0$. Then each node in layer 2 must play $0$, having exactly $m'-1$ supporting neighbors (the ones from layer 3). This leads to a contradiction in node $b$, which plays $0$ and yet has at most one supporting neighbor. Therefore, $b=1$.
Now, each node in layer 2 must also play $1$, having exactly $m'$ supporting neighbors (the ones from layer 3, and $b$). So indeed al nodes of the FG must play 1.

Now, assume by way of contradiction that $v_i=0$. Then node $b$ plays $1$ and yet has exactly $m'-1$ supporting neighbors, which is a contradiction. If $v_i=1$ on the other hand, there is no contradiction. So indeed $v_i$ must be set to $1$, and so do all nodes of the FG. 
\end{proof}

\begin{lemma}
For any $i\in [n]$, there exists an NTPNE in $P$ s.t $v_i=1$ iff there exists an NTPNE in $P'_i$.
\label{lemma2_add_2_pluses}
\end{lemma}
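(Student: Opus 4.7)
The plan is to show that the Antenna nodes and the Force-1-Gadget together shift each $G$-node's supporting-neighbor count by exactly $m$, which—by the shift identity $T'[k+m]=T[k]$—converts the equilibrium condition under $T'$ in $G'_i$ into the equilibrium condition under $T$ in $G$. The auxiliary observation I will use repeatedly is that in any PNE of $P'_i$: (a) every Antenna plays $1$, since its single neighbor supplies either $0$ or $1$ supporting neighbors and $T'[0]=T'[1]=1$; and (b) every node of the Force-1-Gadget plays $1$, by Lemma \ref{lemma1_add_2_pluses}, which in particular forces the bridge $b=1$.

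For the forward direction, let $s$ be an NTPNE in $P$ with $s_{v_i}=1$. I define $s'$ on $P'_i$ by copying $s$ on the $G$-nodes and setting every Antenna and every FG-node to $1$. Best-response verification for Antennas and FG-nodes follows directly from the observations above (the FG check is the same case split as in Lemma \ref{lemma1_add_2_pluses}). For a node $v_j \neq v_i$ with $k$ supporting neighbors in $G$, its supporting count in $G'_i$ is $k+m$ (all $m$ Antennas contribute), and the identity $T'[k+m]=T[k]=s_j$ gives best response. For $v_i$ the count is $k+(m-1)+1 = k+m$, because it has $m-1$ Antennas plus the bridge $b=1$, so the same identity applies. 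Non-triviality is immediate from $s_{v_i}=1$.

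For the reverse direction, let $s'$ be an NTPNE in $P'_i$. Lemma \ref{lemma1_add_2_pluses} gives $s'_{v_i}=1$ and forces all FG-nodes to $1$, and (a) forces all Antennas to $1$. Restrict $s'$ to the $G$-nodes to obtain $s$ on $P$. For each $v_j$ the supporting count in $G'_i$ equals its supporting count in $G$ plus exactly $m$ (by the same Antenna/bridge bookkeeping as above), so the equilibrium relation $s'_{v_j}=T'[\text{count}_{G'_i}]$ becomes $s_j = T[\text{count}_G]$, i.e., $s$ is a PNE in $P$. The profile is non-trivial since $s_{v_i}=1$, and the equivalence is established.

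The only step that needs care is the bookkeeping for $v_i$: it is intentionally given one fewer Antenna precisely so that the bridge $b$ contributes the missing unit, making the shift uniformly $+m$ across all $G$-nodes. Once this uniformity is in place, everything reduces to a direct substitution via $T'[k+m]=T[k]$; the two preceding lemmas do all the heavy lifting, so I do not anticipate any further obstacle.
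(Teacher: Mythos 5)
Your proof is correct and follows essentially the same route as the paper: copy the profile, set all Antenna and Force-1-Gadget nodes to $1$ in the forward direction, and in both directions use the uniform $+m$ shift in supporting-neighbor counts (with $v_i$'s missing Antenna compensated by the bridge) together with the identity $T'[k+m]=T[k]$ and Lemma \ref{lemma1_add_2_pluses}. No gaps.
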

\begin{proof}
Fix $i\in[n]$, and let $s$ be an NTPNE in $P$ s.t $v_i=1$. We construct the strategy profile $s'$ of $P'_i$ such that 
\[\forall v_k \in V\;\; v'_k=v_k\]
\[\forall v' \in V'\setminus V \;\; v'=1\]
We now show that $s'$ is an NTPNE in $P'_i$.
Clearly, all Antenna nodes play their best response, having exactly one neighbor (since $T'[0]=T'[1]=1$). All FG nodes also play their best response, according to Lemma \ref{lemma1_add_2_pluses}. 

We now show the original nodes also play their best response. Let $v_k\in V$ be some node in $G'_i$. By definition $s'_k=s_k$. $v_k$ is adjacent in $G'_i$ to all nodes it was adjacent to in $G$, and by definition all its (original) neighbors receive the same assignment in $s'$ as they did in $s$. In addition, recall that $v_k$ has $m$ new neighbors, all of which receive an assignment of 1 (specifically, for $v_i$ one of those $m$ neighbors is the bridge node $b$ of the FG). Therefore, $v_k$ has $m$ more supporting neighbors in $G'_i$ according to $s'$ than it did in $G$ according to $s$. Since $T'$ is shifted by $m$ from $T$, and since $s$ is an NTPNE, we have that $v_k$ must be playing its best response in $s'$, and thus $s'$ is an NTPNE\footnote{Clearly $s'$ is non-trivial, for example the Antenna nodes are all assigned 1.}.

In the other direction, Let $s'$ be an NTPNE in $P'_i$. We construct the strategy profile $s$ of $P$ where
\[\forall v_k \in V\;\; s_k=s'_k\]
Let $v_k\in V$ be some node in $G$. In $G'_i$, the equivalent node had $m$ additional neighbors, all of which must be set to $1$ in $s'$. Therefore, by definition of $s$, $v_k$ has $m$ less supporting neighbors in $G_i$ according to $s$ than it did in $G'_i$ according to $s'$. Since $T'$ is shifted by $m$ from $T$, and since $s'$ is an NTPNE, we have that $v_k$ must be playing its best response in $s$, and thus $s$ is a PNE. In addition, from Lemma \ref{lemma1_add_2_pluses} we have that $v_i$ must play $1$ in $s'$, and therefore also in $s$, hence $s$ is non-trivial, i.e. $s$ is an NTPNE. 
\end{proof}
We now continue with the proof of the theorem. According to Lemma \ref{lemma2_add_2_pluses}, if there doesn't exist an NTPNE in $P$, then for all $i\in[n]$ there doesn't exist an NTPNE in $P'_i$.
On the other hand, if there does exist an NTPNE $s$ in $P$, then there must be some node $v_i\in V$ s.t $v_i=1$ according to $s$. Therefore, from Lemma \ref{lemma2_add_2_pluses}, we have that $P'_i$ has an NTPNE. Therefore, given an oracle $A$ which solves NTPNE($T'$), we run $A$ on each of the games $P'_1,...,P'_n$. If there exists an NTPNE in one of them, there must exist one in $P$, and otherwise there must not exist one in $P$. 
\end{proof}

\subsection{Proof of Theorem \ref{theorem_+-++}}
\label{appendix_+-++}
\begin{proof}
We define the BRP $T$ by:
\[\forall k\in {\rm I\!N} \;\; T[k]=T'[k+1]\]
i.e. $T'$ is positively-shifted by 1 from $T$. Notice that $T$ satisfies all conditions of Theorem \ref{theorem_flat_NPC} (specifically, the conditions of $T'$ imply that $T$ is non-monotone). Hence, NTPNE($T$) is NPC, which allows us to construct a Turing-reduction from it. Let $N$ be the largest index s.t $T'[N]=1$, i.e. $\forall k>N \;\; T'[k]=0$.
For convenience, we demonstrate the general form of $T'$:
\[T'=[1,0,0,...,\underbrace{1}_{m},1,?,?,...,\underbrace{1}_{N},0,0,...]\]
where $m$ is defined in Theorem \ref{theorem_+-++} itself. Let $P$ be a PGG defined by $T$ on a graph $G=(V,E)$. Denote $V=\{v_1,...,v_n\}$. We construct $n$ PGGs $P'_1,...,P'_n$ corresponding to $T'$, on the graphs $G'_1=(V'_1,E'_1),...,G'_n=(V'_n,E'_n)$ respectively. For each $i\in [n]$, $G'_i$ contains the graph $G$, and in addition, each node is connected to a unique \textit{Node Gadget} (NG) composed of $2m$ nodes, and node $v_i$ is additionally connected to a \textit{Force-1-Gadget} (FG).
\begin{figure}[h!]
\begin{minipage}{.5\textwidth}
    \centering
    \includegraphics[width=\textwidth]{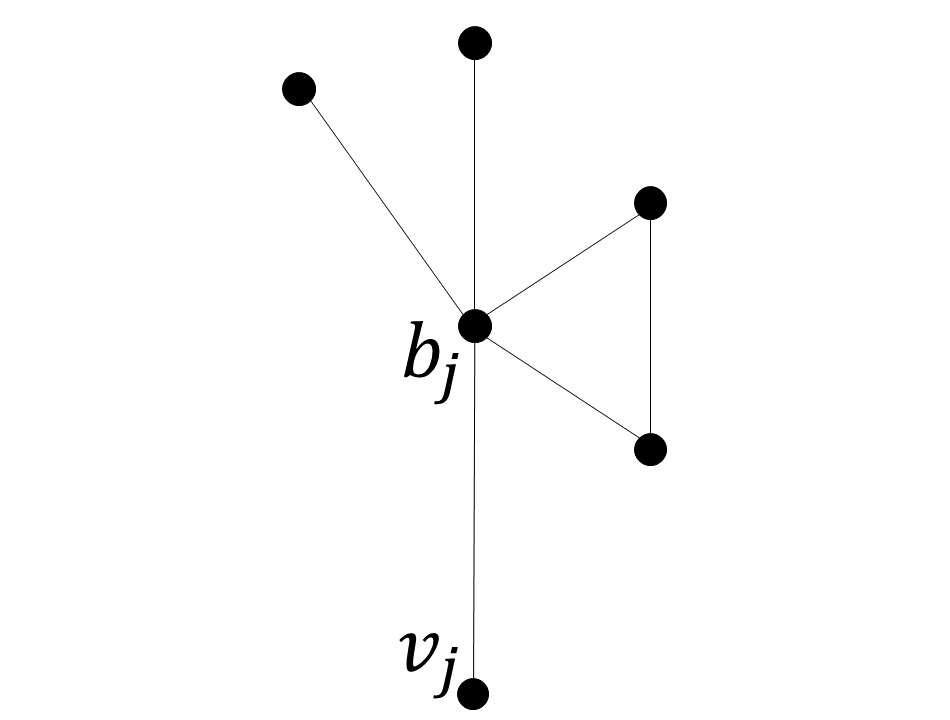}
    \caption{\textit{Node-Gadget with $m=3$}}
    \label{fig_+-++_NG}
\end{minipage}%
\begin{minipage}{.5\textwidth}
    \centering
    \includegraphics[width=\textwidth]{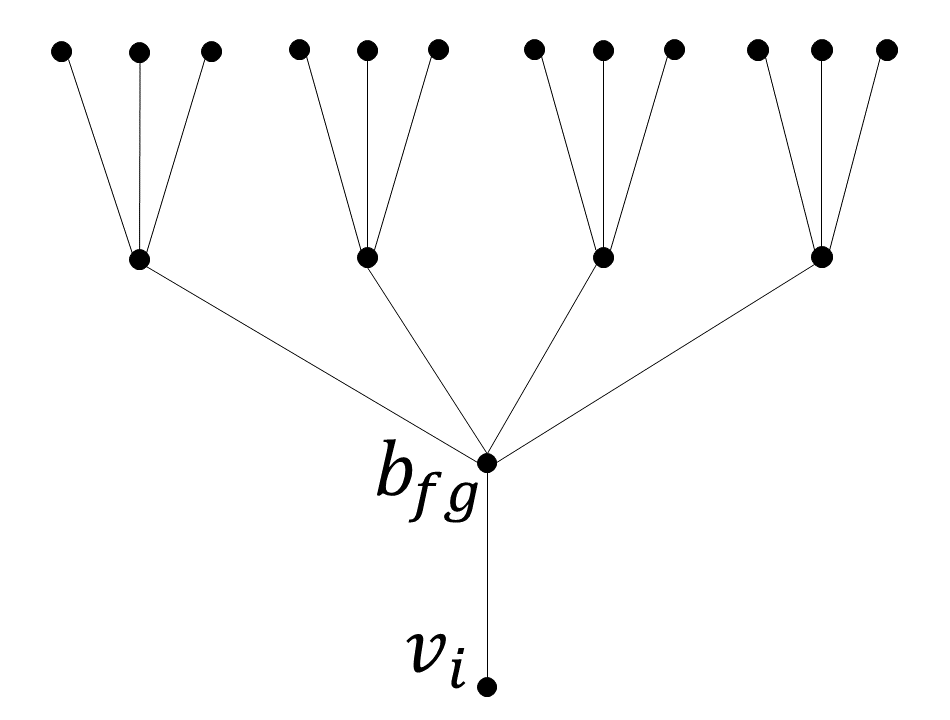}
    \caption{\textit{Force-1-Gadget with $m=3$, $N=4$}}
    \label{fig_+-++_FG}
\end{minipage}
\end{figure}

\textit{Node Gadget} (shown in Figure \ref{fig_+-++_NG}). For each node $v_j\in V$, the Node Gadget is defined as follows. We construct a $(m+1)-Clique$ Denoted $C_j$, from which one of the nodes, denoted $b_j$, will be referred to as the 'ng-bridge' (node-gadget-bridge) node. To $b_j$, we connect $m-1$ 'Antenna' nodes, all of which have only $b_j$ as a neighbor. We also connect $b_j$ to $v_j$.

\textit{Force-1-Gadget} (shown in Figure \ref{fig_+-++_FG}). The Force-1-Gadget of $v_i$ consists of 3 layers. The first layer consists of only a single 'fg-bridge' (force-gadget-bridge) node, denoted $b_{fg}$, which is also the only node in the gadget connected to $v_i$. The second layer consists of $N$ nodes, all of which are connected to $b_{fg}$. The third layer consists of $m\cdot N$ nodes, such that each node of the previous layer is connected to $m$ unique ones of them (i.e. layer 2 nodes don't share any layer 3 nodes).

We begin by proving the following lemmas.
\begin{lemma}
Fix $i\in [n]$, and let $s'$ be a PNE of the PGG $P'_i$. Let $v_j\in V$ be some node in $G'_i$. Then the ng-bridge node $b_j$ must be set to $1$ according to $s'$. In addition, when $b_j=1$ there exists an assignment s.t all nodes of the Node-Gadget play their best response.
\label{lemma1_+-++}
\end{lemma}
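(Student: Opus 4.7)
The plan is to split the claim into two parts: first show that $b_j=1$ is forced in every PNE, and then exhibit a consistent best-response assignment of the remaining Node Gadget vertices once $b_j=1$. Both parts reduce to a small case analysis of the clique $C_j$ together with the $m-1$ antennas.

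For the first part I would argue by contradiction. Suppose $b_j=0$. Each antenna has only $b_j$ as a neighbor, so it has zero supporters; since $T'[0]=1$, every antenna must play $1$. Now let $r$ be the number of clique nodes in $C_j\setminus\{b_j\}$ playing $1$. Each such clique node $c$ has exactly $m$ neighbors, all inside $C_j$, so if $c=1$ it must satisfy $T'[r-1]=1$. The flat block $T'[1]=\cdots=T'[m-1]=0$ rules out $r-1\in\{1,\dots,m-1\}$, leaving only $r-1=0$, i.e.\ $r=1$; the remaining clique nodes playing $0$ then see $r=1$ supporter and $T'[1]=0$ confirms their choice. But then $b_j$ has $(m-1)+r+s'_{v_j}=m+s'_{v_j}$ supporters, lying in $\{m,m+1\}$, and since $T'[m]=T'[m+1]=1$ the best response at $b_j$ is $1$, contradicting $b_j=0$.

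For the second part I would exhibit the assignment in which all $m-1$ antennas play $0$ and all $m$ clique nodes in $C_j\setminus\{b_j\}$ play $1$. Each antenna then sees a single supporter ($b_j$), and $T'[1]=0$ justifies $0$; each other clique node sees $m$ supporters (the rest of $C_j$), and $T'[m]=1$ justifies $1$; and $b_j$ itself sees $m$ clique supporters, no antenna supporters, plus $s'_{v_j}$, totalling $m$ or $m+1$, both of which are $1$-entries of $T'$. Hence the assignment is locally at best response regardless of how $v_j$ is assigned, which is exactly what will make the Node Gadget compose cleanly in the overall reduction. The only delicate step is the clique forcing in part one: ruling out $r=0$ uses $T'[0]=1$ (which would otherwise wake up every clique node), ruling out $r\geq 2$ uses the flat block $T'[1],\dots,T'[m-1]=0$, and the two obstructions dovetail only because the hypothesis $m\geq 2$ of Theorem~\ref{theorem_+-++} is in force.
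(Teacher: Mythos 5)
Your proof is correct and follows essentially the same route as the paper: under the assumption $b_j=0$ the antennas are forced to $1$, exactly one non-bridge clique node can play $1$ (your counting via $r$ is just a tidier phrasing of the paper's case analysis), so $b_j$ sees $m$ or $m+1$ supporters and must flip, a contradiction; and your explicit assignment for the second part (antennas $0$, all of $C_j$ at $1$) is exactly the paper's.
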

\begin{proof}
Assume by way of contradiction that $b_j=0$. Since the 'Antennas' of the node gadget are only adjacent to $b_j$, their best response according to $T'$ is $1$ regardless of $b_j$'s strategy, hence they all must be set to $1$. Therefore $b_j$ has (at least) $m-1$ supporting neighbors. Examine the remaining nodes of the Clique $C_j$ of the NG (excluding $b_j$). Since $b_j=0$, they can't all be set to $0$, because then each of them is not playing its best response. Therefore, at least one node in $C_j$ must be playing $1$, denoted $c$. Hence, all other nodes in $C_j$ must be set to $0$, otherwise $c$ would have between $2$ to $m-1$ supporting neighbors, thus it would not be playing its best response according to $T'$. Altogether, $b_j$ must have either $m$ or $m+1$ supporting neighbor, depending on the strategy of $v_j$. In both cases $b_j$ is not playing its best response according to $T'$, in contradiction to $s'$ being a PNE.

Note that when $b_j=1$, we can set all its Antenna nodes to $0$ and all of the nodes in the Clique $C_j$ to $1$, and have that all nodes in the NG play their best response (regardless of the strategy of $v_j$). 
\end{proof}

\begin{lemma}
For any $i\in [n]$, in any PNE of $P'_i$, $v_i$ must be set to $1$. In addition, there is only one possibility for an assignment of the Force-1-Gadget nodes, and in this assignment $b_{fg}=0$.
\label{lemma2_+-++}
\end{lemma}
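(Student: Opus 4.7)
The plan is to propagate constraints layer by layer through the Force-1-Gadget, starting from the leaves (layer 3) and working back toward $v_i$, using that every relevant entry of $T'$ is pinned by the hypothesis: $T'[0]=1$, $T'[1]=0$, $T'[m]=T'[m+1]=1$, and $T'[N+1]=0$ (since $N$ is the last index where $T'$ equals $1$).

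First I would analyze a single layer-2 node $\ell$. Its neighbors are $b_{fg}$ and its $m$ private layer-3 nodes. Each layer-3 node has degree exactly $1$, so its strategy is forced by $\ell$'s strategy. I claim $\ell$ must be $1$. Suppose instead $\ell=0$; then every layer-3 neighbor of $\ell$ sees $0$ supporting neighbors and must play $T'[0]=1$. Thus $\ell$ itself has either $m$ or $m+1$ supporting neighbors (depending on $b_{fg}$), and since $T'[m]=T'[m+1]=1$, the best response of $\ell$ is $1$, contradicting $\ell=0$. So $\ell=1$, and consequently each of its layer-3 neighbors has exactly one supporting neighbor and must play $T'[1]=0$ (using that $T'$ is sharp).

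Next, given $\ell=1$ and all its layer-3 neighbors equal to $0$, $\ell$'s number of supporting neighbors is either $0$ or $1$. Since $T'[1]=0$ while $\ell=1$, we must have that count equal to $0$, forcing $b_{fg}=0$. This pins every layer-2 node to $1$, every layer-3 node to $0$, and $b_{fg}=0$, giving the claimed uniqueness of the FG assignment.

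Finally, I would read off $v_i$. The neighbors of $b_{fg}$ are $v_i$ together with the $N$ layer-2 nodes, all of which are now $1$. So $b_{fg}$ has either $N$ or $N+1$ supporting neighbors. Since $b_{fg}=0$ and $T'[N]=1$, the count cannot be $N$, hence it is $N+1$, which forces $v_i=1$. Consistency is immediate: $T'[N+1]=0$ matches $b_{fg}=0$. The only subtlety I anticipate is making sure the argument does not secretly rely on $m\geq 2$ being strict or on $N>m+1$; both are fine because we only invoke $T'[0],T'[1],T'[m],T'[m+1],T'[N],T'[N+1]$, each of which is determined directly from the hypotheses on $T'$, so no degenerate overlap causes a conflict.
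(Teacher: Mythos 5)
Your proof is correct and follows essentially the same route as the paper's: propagate forced values from the degree-one layer-3 nodes to the layer-2 nodes, then to $b_{fg}$, then to $v_i$, invoking exactly the entries $T'[0],T'[1],T'[m],T'[m+1],T'[N],T'[N+1]$. The only difference is cosmetic --- you pin the layer-2 nodes to $1$ first and then deduce $b_{fg}=0$, whereas the paper first assumes $b_{fg}=1$ and derives a contradiction for both possible values of a layer-2 node; your ordering is, if anything, slightly cleaner.
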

\begin{proof}
We begin by showing the assignment of the FG nodes.
Assume by way of contradiction that the fg-bridge node $b_{fg}$ is set to $1$. Let $x$ be some node from layer 2 of the FG. If $x=1$ then all its Antenna nodes must be set to $1$ according to $T'$, and therefore $x$ has one supporting neighbor, and prefers playing $0$. If $x=0$, then all its Antenna nodes must be set to $0$ according to $T'$, and therefore $x$ has $m+1$ supporting neighbors, and prefers playing $1$. Hence, $b_fg=0$. Now, if $x$ is playing $0$, all its Antenna nodes must play $1$, hence $x$ has $m$ supporting neighbors, which means it isn't playing its best response. Thus, all nodes in layer 2 of the FG must play $1$, and therefore all Antenna nodes (layer 3 of the FG) must play $0$.

Now, assume by way of contradiction that $v_i=0$. Then the fg-bridge node $b_{fg}$ has $N$ supporting neighbors (from layer 2), and yet plays $0$, in contradiction to $T'$. Therefore, $v_i$ must be set to 1. In this case, $b_{fg}$ has $N+1$ supporting neighbors, which means it is indeed playing its best response, concluding the proof of the lemma. 
\end{proof}

\begin{lemma}
Fix $i\in [n]$. There exists an NTPNE in $P$ s.t $v_i=1$ iff there exists an NTPNE in $P'_i$.
\label{lemma3_+-++}
\end{lemma}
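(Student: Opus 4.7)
The plan is to leverage the preceding two lemmas to reduce Lemma~\ref{lemma3_+-++} to a clean counting argument. Lemmas~\ref{lemma1_+-++} and~\ref{lemma2_+-++} already pin down the gadgets: in any PNE of $P'_i$, every ng-bridge $b_j$ must be $1$ (with a consistent Node-Gadget completion), while $v_i$ must be $1$ and the fg-bridge $b_{fg}$ must be $0$ in the unique Force-1-Gadget assignment. The upshot is that translating between a strategy profile on $V$ and one on $V'_i$ that extends it by the prescribed NG/FG values shifts the supporter count of each original vertex $v_j$ by exactly one, namely the contribution of $b_j=1$, while the FG contributes zero to $v_i$ because $b_{fg}=0$. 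Since $T'$ is positively-shifted by $1$ from $T$, i.e.\ $T'[k+1]=T[k]$, this uniform shift by one neighbor exactly matches the shift between the two best-response patterns.

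For the forward direction I would take an NTPNE $s$ of $P$ with $s_i=1$ and extend it to $V'_i$ by copying $s$ on $V$, using the Node-Gadget completion from Lemma~\ref{lemma1_+-++} (each $b_j=1$, the remaining clique nodes of each $C_j$ set to $1$, and the Antennas set to $0$), and using the forced FG assignment from Lemma~\ref{lemma2_+-++}. The gadget nodes play best response by those lemmas. For each $v_j\in V$, its supporter count grows by exactly one relative to $s$, so if $s_j=T[k]$ then the new count in $G'_i$ is $k+1$, and $T'[k+1]=T[k]=s_j$ confirms that $v_j$ still plays best response. Non-triviality is immediate, since clique nodes are set to $1$.

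For the reverse direction I would start with an NTPNE $s'$ of $P'_i$, apply Lemmas~\ref{lemma1_+-++} and~\ref{lemma2_+-++} to conclude $b_j=1$ for every $j$ and $b_{fg}=0$, and then restrict $s'$ to $V$ by setting $s_j:=s'_j$. Every $v_j$ now loses exactly one supporter compared to $s'$, namely $b_j$, and the identity $T[k]=T'[k+1]$ again preserves best response. Lemma~\ref{lemma2_+-++} guarantees $v_i=1$, so the restriction is non-trivial and witnesses an NTPNE of $P$ with $v_i=1$.

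The only subtlety I expect is the treatment of the distinguished vertex $v_i$, which is adjacent both to its ng-bridge $b_i$ and to the fg-bridge $b_{fg}$: the ``shift by one'' formula only stays uniform across all of $V$ because Lemma~\ref{lemma2_+-++} pins $b_{fg}=0$, so the FG quietly drops out of the count at $v_i$ and leaves only $b_i=1$ as its extra supporter. Everything else is routine bookkeeping of neighbor counts together with two applications of the shift $T'[k+1]=T[k]$.
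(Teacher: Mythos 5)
Your proposal is correct and follows essentially the same route as the paper: copy the profile on $V$, complete the Node-Gadgets and Force-1-Gadget with the assignments established in Lemmas~\ref{lemma1_+-++} and~\ref{lemma2_+-++}, and use the shift $T'[k+1]=T[k]$ on the uniform one-supporter increase (with $b_{fg}=0$ neutralizing the FG at $v_i$), plus Lemma~\ref{lemma2_+-++} for non-triviality in the reverse direction. No substantive differences from the paper's argument.
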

\begin{proof}
Let $s$ be an NTPNE in $P$ s.t $v_i=1$ according to $s$. We construct the strategy profile $s'$ of $P'_i$ such that all NG-nodes and FG-nodes are assigned according to the PNEs described in the proofs of Lemmas \ref{lemma1_+-++} and \ref{lemma2_+-++} respectively, and additionally
\[\forall v_k \in V\;\; v'_k=v_k\]
We now show that $s'$ is an NTPNE in $P'_i$.
We have already shown in Lemmas \ref{lemma1_+-++} and \ref{lemma2_+-++} that all NG nodes and FG nodes play their best response. It remains to be shown that the original nodes also play their best response. Let $v_k\in V$ be some node in $G'_i$. By definition $s'_k=s_k$. $v_k$ is adjacent in $G'_i$ to all nodes it was adjacent to in $G$, and by definition all its (original) neighbors receive the same assignment in $s'$ as they did in $s$. In addition, from Lemma \ref{lemma1_+-++} we have that $v_k$ has one additional supporting neighbor, which is the ng-bridge node $b_k$. From Lemma \ref{lemma2_+-++} we have that even if $k=i$, the additional fg-bridge node is assigned 0, so it does not affect $v_k$. Therefore, $v_k$ has exactly one additional supporting neighbor in $P'_i$ than it does in $P$. Since $T'$ is shifted by 1 from $T$, and since $s$ is a PNE, we have that $v_k$ must be playing its best response in $s'_i$ as well, and thus $s'$ is an NTPNE\footnote{clearly there exists a non-zero assignment in $s'$, for example the ng-bridge nodes.}.

In the other direction, Let $s'$ be an NTPNE in $P'_i$. We construct the strategy profile $s$ of $P$ where
\[\forall v_k \in V\;\; s_k=s'_k\]
Let $v_k\in V$ be some node in $G$. In $G'_i$, the equivalent node had exactly the same neighbors from $V$ (all of which play the same in both games), and according to Lemmas \ref{lemma1_+-++} and \ref{lemma2_+-++}, it had exactly one additional supporting neighbor from the node gadget. Therefore, by definition of $s$, $v_k$ has 1 less supporting neighbor in $G_i$ according to $s$ than it did in $G'_i$ according to $s'$. Since $T'$ is shifted by 1 from $T$, and since $s'$ is an NTPNE, we have that $v_k$ must be playing its best response in $s$, and thus $s$ is a PNE. In addition, from Lemma \ref{lemma2_+-++} we have that $v_i$ must play $1$ in $s'$, and therefore also in $s$, hence $s$ is non-trivial, i.e. $s$ is an NTPNE. 
\end{proof}

We now continue with the proof of the theorem. According to Lemma \ref{lemma3_+-++}, if there doesn't exist an NTPNE in $P$, then for all $i\in[n]$ there doesn't exist an NTPNE in $P'_i$.
On the other hand, if there does exist an NTPNE $s$ in $P$, then there must be some node $v_i\in V$ s.t $v_i=1$ according to $s$. Therefore, from Lemma \ref{lemma3_+-++}, we have that $P'_i$ has an NTPNE. Therefore, given an oracle $A$ which solves NTPNE($T'$), we run $A$ on each of the games $P'_1,...,P'_n$. If there exists an NTPNE in one of them, there must exist one in $P$, and otherwise there must not exist one in $P$. 
\end{proof}

\subsection{Proof of Theorem \ref{theorem_add_+-_NPC}}
\label{appendix_add_+-_NPC}
\begin{proof}
We construct a many-one reduction from NTPNE($T$). Let $P$ be a PGG defined by $T$ on a graph $G=(V,E)$. Denote $V=\{v_1,...,v_n\}$. We construct the PGG $P'$ corresponding to $T'$, on the graph $G'=(V',E')$. $G'$ contains the graph $G$, and in addition, each node is connected to a unique \textit{Node Gadget} (NG) composed of 4 nodes.
\begin{figure}[h!]
\centering
\includegraphics[width=0.6\textwidth]{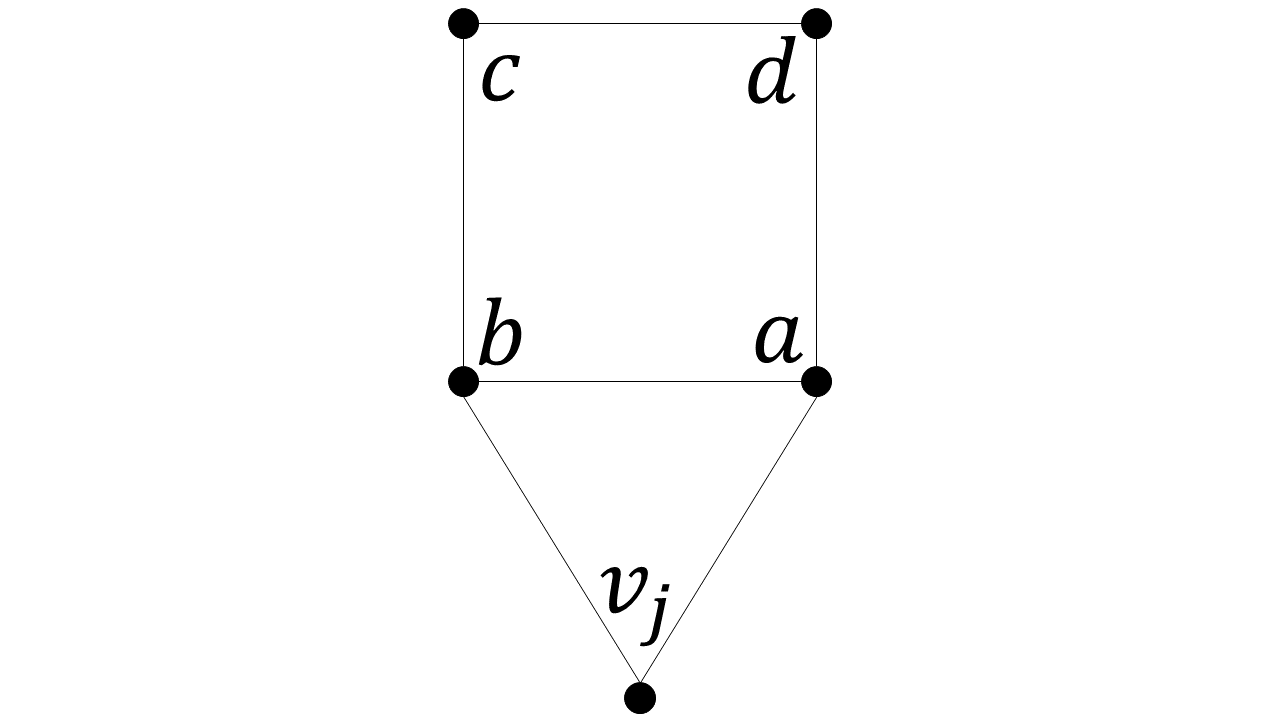}
\caption{\textit{Node-Gadget}}
\label{fig_+-+_NG}
\end{figure}

\textit{Node-Gadget} (shown in Figure \ref{fig_+-+_NG}). For each node $v_j\in V$, the 4 nodes of its Node Gadget, denoted $a,b,c,d$, form a cycle. $v_j$ is connected to $a,b$.

\begin{lemma}
Let $s'$ be a PNE of the PGG $P'$, and let $v_j\in V$ be some node in $G'$. Then nodes $a,b$ of $v_j$'s Node-Gadget must be set to $1$ according to $s'$. In addition, if $a=b=1$ there exists an assignment to $c,d$ s.t $a,b,c,d$ play their best response.
\label{lemma_+-+}
\end{lemma}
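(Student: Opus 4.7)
The proof proceeds by a finite case analysis on $(a,b)\in\{0,1\}^2$, ruling out every possibility except $(1,1)$, and then exhibits an explicit assignment to $(c,d)$ in the remaining case. The analysis should use only three values of $T'$: $T'[0]=1$ and $T'[1]=0$ come directly from $T'$ being sharp, and $T'[2]=T[0]=1$ follows from $T'$ being shifted by $2$ from a non-flat $T$. It is essential to avoid invoking $T'[3]=T[1]$, which is not determined by the hypotheses.

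First I would record the degrees inside the Node-Gadget: $a$ and $b$ each have three neighbors ($v_j$ together with their two cycle-neighbors), whereas $c$ and $d$ each have only two neighbors, both inside the NG. This caps the relevant $T'$-indices for the best responses of $c$ and $d$ at $2$, exactly the range where $T'$ is known. I would also exploit the involution $(a,b,c,d)\mapsto(b,a,d,c)$ of the gadget, which preserves the cycle and fixes $v_j$'s adjacencies, so that the sub-case $(a,b)=(1,0)$ reduces to $(0,1)$ and it suffices to contradict $(a,b)=(0,0)$ and $(a,b)=(0,1)$.

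For each of these two sub-cases I would enumerate the four assignments of $(c,d)$. Most subcases are killed instantly by the best response of $c$ or $d$: for instance, if $c$ sees $0$ supporting neighbors but is set to $0$, then $T'[0]=1$ forces $c=1$, a contradiction; and if $c$ sees $1$ and is set to $1$, then $T'[1]=0$ forces $c=0$. The few subcases that survive this first filter (in which $(c,d)$ is uniquely forced to a consistent pair) are then finished by looking at the remaining $v_j$-side node: typically $b$ ends up seeing no supporters yet is assigned $0$, again contradicting $T'[0]=1$.

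For the second half of the lemma, fixing $a=b=1$, I would give an explicit completion depending on $v_j$: take $(c,d)=(1,1)$ if $v_j=0$ and $(c,d)=(0,0)$ if $v_j=1$. A direct verification shows that in the first case every NG node sees exactly two supporters (validated by $T'[2]=1$), and in the second case $a,b$ still see two supporters while $c,d$ see exactly one (validated by $T'[1]=0$). The main obstacle throughout is staying within the range of $T'$ guaranteed by the hypotheses; the gadget is engineered so that every contradiction surfaces through a degree-$2$ node, or through a degree-$3$ node seeing at most two supporters, which is precisely what keeps the argument independent of the unknown entry $T[1]$.
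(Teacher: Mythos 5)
Your proposal is correct and takes essentially the same route as the paper's proof: an exhaustive case analysis on the $4$-cycle gadget using only $T'[0]=1$, $T'[1]=0$ and $T'[2]=T[0]=1$, ending with the same explicit completions $(c,d)=(1,1)$ when $v_j=0$ and $(c,d)=(0,0)$ when $v_j=1$. The only difference is bookkeeping — you case on $(a,b)$ and exploit the $a\leftrightarrow b$, $c\leftrightarrow d$ symmetry, while the paper cases on $v_j$ first — and your explicit remark about avoiding the undetermined entry $T'[3]$ is a nice touch that the paper leaves implicit.
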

\begin{proof}
We first prove $a,b$ must be set to $1$. Assume by way of contradiction that the claim is incorrect. Divide into two cases.

\textbf{Case 1}: Assume $v_j$ is playing $0$. If both $a$ and $b$ play $0$ then each of them must have some supporting neighbor, and therefore $c, d$ must play $1$. This leads to a contradiction in $c, d$, each of which has one supporting neighbor and, according to $T'$, prefers playing $0$. If only one of $a, b$ plays $0$, w.l.o.g $a=1$ and $b=0$, then $c$ must play $0$, otherwise $b$ has two supporting neighbors and prefers to play $1$. $d$ has only one supporting neighbor and thus, by the definition of $T'$, prefers to play $0$. The contradiction comes from $c$ which has no supporting neighbors, and yet plays 0. 
Notice that, in this case, if $a=b=c=d=1$ then they all play their best response.

\textbf{Case 2}: Assume $v_j$ is playing $1$. If both $a$ and $b$ play $0$ then $c,d$ must play $0$, otherwise $a$ or $b$ would have two supporting neighbors and would prefer playing $1$ according to $T'$. This leads to a contradiction in $c,d$, both of which have no supporting neighbors and according to $T'$ prefer playing $1$. If only one of $a,b$ plays $0$, w.l.o.g $a=1$ and $b=0$, then $c$ must play $1$, otherwise $b$ has two supporting neighbors and prefers to play $1$. $d$ has two supporting neighbors and thus, by the definition of $T'$, prefers to play 1. The contradiction comes from $c$ which has one supporting neighbors, and yet plays 1.
Notice that, in this case, if $a=b=1,c=d=0$ then they all play their best response. 
\end{proof}

We now continue with the proof of the theorem, showing that there exists an NTPNE in $P$ iff there exists one in $P'$. 
Let $s$ be an NTPNE in $P$. We construct the strategy profile $s'$ of $P'$ such that 
\[\forall v_j \in V\;\; s'_j=s_j\]
and additionally, all nodes of the node gadgets are set according to the assignments shown in Lemma \ref{lemma_+-+}, i.e. for each $j\in[n]$ if $v_j=0$ then $a=b=c=d=1$, and if $v_j=1$ then $a=b=1,c=d=0$.

As stated during the proof of Lemma \ref{lemma_+-+}, all of the Node-Gadget nodes indeed play their best response in $s'$. It is left to show the same for the original nodes. Let $v_j\in V$ be some node in $G'$. By definition $s'_j=s_j$. $v_j$ is adjacent in $G'$ to all nodes it was adjacent to in $G$, and by definition all its (original) neighbors receive the same assignment in $s'$ as they did in $s$. In addition, according to Lemma \ref{lemma_+-+}, $v_j$ must have two additional supporting neighbors from the Node-Gadget. Therefore, if $v_j$ had $k$ supporting neighbors in $G$ according to $s$, it now has $k+2$ supporting neighbors in $G'$ according to $s'$. Since $T'$ is shifted by 2 from $T$, and since $s$ is an NTPNE, we have that $v_j$ must be playing its best response in $s'$, and thus $s'$ is a PNE. Moreover, $s'$ is clearly non-trivial\footnote{For example, nodes $a,b$ of each NG must be set to $1$.}, and thus it is an NTPNE.

In the other direction, Let $s'$ be an NTPNE in $P'$. We construct the strategy profile $s$ of $P$ such that:
\[\forall v_j \in V\;\; s_j=s'_j\]
Let $v_j\in V$ be some node in $G$. According to Lemma \ref{lemma_+-+}, the equivalent node in $G'$ had 2 additional supporting neighbors from its Node-Gadget. Other than these neighbors, $v_j$'s neighbors are the same in $G'$ and in $G$, and by definition of $s$ they all play the same strategy in both games. Therefore, $v_j$ has exactly 2 less supporting neighbors in $G$ according to $s$ than it did in $G'$ according to $s'$. Since $T'$ is shifted by 2 from $T$, and since $s'$ is an NTPNE, we have that $v_j$ must be playing its best response, and thus $s$ is a PNE. It is left to show $s$ is not trivial. Since $T$ is non-flat, and $T'$ is shifted by 2 from $T$, we have that $T'[2]=1$. Assume by way of contradiction that all $v_j\in V$ play $0$ according to $s'$, i.e. only the NG nodes $a,b$ of each node are assigned $1$. Then we have that each $v_j\in V$ has exactly 2 supporting neighbors, thus not playing its best response according to $T'$, in contradiction to $s'$ being a PNE. Therefore, there must be some node $v_j\in V$ s.t $v_j=1$ in $s'$, and by definition of $s$, $v_j=1$ in $s$ as well. So, $s$ is non-trivial, i.e. $s$ is an NTPNE. 
\end{proof}
\end{subappendices}
%
%
%
\bibliographystyle{splncs04}
\bibliography{mybibliography}

\end{document}